\def \diracspacing {0.7pt}
\newcommand{\ket}[1]{| \hspace{\diracspacing} #1 \rangle} % ket
\newcommand{\braket}[2]{\langle #1 \hspace{\diracspacing} | \hspace{\diracspacing} #2 \rangle} % braket with different vectors
\newcommand{\ketbra}[2]{| \hspace{\diracspacing} #1 \rangle \langle #2 \hspace{\diracspacing} |} % ketbra with different vectors
\newcommand{\ketbraq}[1]{\ketbra{#1}{#1}} % ketbra with the same vector
\newcommand{\norm}[2][]{#1\left| \! #1\left| #2 #1\right| \! #1\right|}
\newcommand{\abs}[2][]{#1| #2 #1|}
\newcommand{\smin}[0]{s_{\textnormal{min}}}
\newcommand{\smax}[0]{s_{\textnormal{max}}}
\newcommand{\unit}{\mathbb{I}}
\DeclareMathOperator{\tr}{tr}
\theoremstyle{definition}
\newtheorem{defn}{Definition}%[section]
\theoremstyle{plain}
\newtheorem{lem}[defn]{Lemma}
\newtheorem{thm}[defn]{Theorem}
\theoremstyle{remark}
\begin{document}
\title{Self-testing mutually unbiased bases in the prepare-and-measure scenario}
\author{M{\'a}t{\'e} Farkas}
\email{mate.farkas@phdstud.ug.edu.pl}
\affiliation{Institute of Theoretical Physics and Astrophysics, National Quantum Information Centre, Faculty of Mathematics, Physics and Informatics, University of Gdansk, 80-952 Gdansk, Poland}
\author{J\k{e}drzej Kaniewski}
\affiliation{Center for Theoretical Physics, Polish Academy of Sciences, Al.~Lotnik{\'o}w 32/46, 02-668 Warsaw, Poland}
\begin{abstract}
Mutually unbiased bases (MUBs) constitute the canonical example of incompatible quantum measurements. One standard application of MUBs is the task known as quantum random access code (QRAC), in which classical information is encoded in a quantum system, and later part of it is recovered by performing a quantum measurement. We analyse a specific class of QRACs, known as the $2^{d} \to 1$ QRAC, in which two classical dits are encoded in a $d$-dimensional quantum system. It is known that among rank-1 projective measurements MUBs give the best performance. We show (for every $d$) that this cannot be improved by employing non-projective measurements. Moreover, we show that the optimal performance can only be achieved by measurements which are rank-1 projective and mutually unbiased. In other words, the $2^{d} \to 1$ QRAC is a self-test for a pair of MUBs in the prepare-and-measure scenario. To make the self-testing statement robust we propose measures which characterise how well a pair of (not necessarily projective) measurements satisfies the MUB conditions and show how to estimate these measures from the observed performance. Similarly, we derive explicit bounds on operational quantities like the incompatibility robustness or the amount of uncertainty generated by the uncharacterised measurements. For low dimensions the robustness of our bounds is comparable to that of currently available technology, which makes them relevant for existing experiments. Lastly, our results provide essential support for a recently proposed method for solving the long-standing existence problem of MUBs.
%
%We provide self-tests for mutually unbiased bases (MUBs) in arbitrary dimensions
%in the prepare-and-measure scenario. To achieve this, we derive the quantum
%upper bound in full generality for the performance in the so-called 
%2-to-1 quantum random access code
%(QRAC) scenario, a simple quantum communication task. We show that the optimal
%performance can only be attained by MUBs, therefore we characterise the
%measurements based solely on their outcome statistics, a procedure referred to
%as self-testing.  In the optimal case, the measurements completely determine
%the states used on the preparation side. We also \blue{propose new measures for
%characterising how close a pair of POVMs is to the MUB arrangement. Using these,
%we} derive robust self-tests in the vicinity of the optimal performance,
%providing experimentally viable certificates for \blue{being close to} a pair of
%MUBs. We show that a sufficiently good QRAC
%performance implies that the measurements are incompatible, which we
%quantify using incompatibility robustness. Another useful operational
%quantity for two measurements is the minimal amount of uncertainty
%generated, and this can also be bounded from the observed QRAC performance. 
%
\end{abstract}
\maketitle
%
%\section{Optimal strategy}
%
\section{Introduction}
Mutually unbiased bases (MUBs) play an important role in many quantum
information processing tasks. They are optimal for quantum state determination
\cite{ivonovic,wooters_fields}, information locking \cite{locking1,locking2},
and the mean king's problem \cite{mean_king1,mean_king2}. Moreover, they give
rise to the strongest entropic uncertainty relations (among projective measurements)
\cite{maassen_uffink,uncertainty_survey,uncertainty_survey2}.
One intuitive way to look at them is the following: imagine that we encode a
classical message in a pure state corresponding to an element of a
basis. Then, if we measure this state in a basis unbiased to the initial
one, each measurement outcome occurs with the same probability. That is, we
do not learn anything about the originally encoded message.
Formally, two bases $\{\ket{a_i}\}_{i=1}^d$ and $\{\ket{b_j}\}_{j=1}^d$ in
$\mathbb{C}^d$ are mutually unbiased if
\begin{equation}
\label{eq:MUB}
\abs{ \braket{a_i}{b_j} }^{2} = \frac{1}{d} \quad \forall i, j \in [d] := \{ 1, 2, \ldots, d\}.
\end{equation}

Due to their importance, significant effort has been dedicated to investigating their structure
(see \cite{onMUBs} for a survey and \cite{MUB2-5} for a classification in
dimensions 2--5). It is known that in dimension $d$, there are at least 3 and at
most $d+1$ MUBs and the upper bound is saturated in prime power dimensions. The
maximal number of MUBs in composite dimensions is a long-standing open problem
(see \cite{MUB6_3,MUB6_4,MUB6_1,MUB6_2,MUB6_5,MUB6_6} for the case of dimension
6).

Another scenario in which MUBs perform well is the so-called
$2^{d} \to 1$ quantum random access code (QRAC) \cite{QRACSR,ArminQRAC}.
In this setup, two classical dits are
encoded into a qudit, and the aim is to recover one of them chosen
uniformly at random. It is well-known that sending a quantum system gives an
advantage over sending a classical system (of the same dimension) \cite{cRAC}
and this fact is used in many quantum information protocols~\cite{conjugate_coding,finite_automata1,complexity1,network1,locally1}.
%
%Due to stronger than classical
%correlations
%attainable by quantum strategies, the protocol shows quantum advantage, and has many
It is commonly believed that the optimal performance of the $2^{d} \to 1$ QRAC is achieved when the measurements correspond to a pair of MUBs in dimension $d$, but this claim has only been proven for a restricted class of measurements~\cite{QRACMUB}.

The observation that quantum systems can give rise to stronger-than-classical correlations was first made by Bell~\cite{Bell} (although in a slightly different setup).
%
%The exploitation of stronger than classical correlations was first
%systematically studied in non-local scenarios, specifically in the context of
Moreover, it turns out that some of these strongly non-classical correlations can be achieved in an essentially {\em unique} manner. That is, the observed statistics allow us to identify
%upon observing optimal performance,
the employed states and measurements (up to local
isometries and extra degrees of freedom). The most prominent example of this kind
is the well-known CHSH inequality \cite{clauser69a}, which is
maximally violated by a pair of MUBs in dimension 2 on both sides 
\cite{tsirelson87a, chsh_selftest1,chsh_selftest2,chsh_selftest3}. Whenever
such an inference --- characterising the state and/or measurements based solely on
the observed statistics --- can be made, it is referred to as {\em
self-testing}
\cite{selftest0,selftest1,selftest2}. Self-testing is closely related to the concept of device-independent (DI) quantum information
processing, in which the devices used in the protocol are a priori untrusted~\cite{barrett05b, acin06a, colbeck06a, DI1, DI2}.
It is clear that what makes DI cryptography possible is precisely the
self-testing character of the correlations observed during the protocol.
By now self-testing is a well-developed field
\cite{selftest_state2, robust2, robust3, robust4, wang16a, supic16a, selftest_state1, selftest_state5} and includes results which are robust to noise~\cite{robust1, bancal15a, robust5, selftest_state3, selftest_state4, robust6, robust7}. Such statements are of particular interest, as they can be directly applied to experiments~\cite{selftest_exp}.

Recently the notion of self-testing has been extended to prepare-and-measure
scenarios~\cite{self-test_prepare}. In this setup, a preparation device creates
one of many possible quantum states and then sends it to a measurement
device. The latter performs one of many possible measurements on the state, and
then produces a classical output. This scenario encompasses many important quantum communication protocols, e.g.~the BB84 and B92 quantum
cryptography protocols \cite{BB84,B92}, and the aforementioned QRACs.

In the prepare-and-measure scenario one cannot distinguish between classical and quantum systems, unless additional restrictions are imposed.
%In the prepare-and-measure scenario, one can only hope for quantum advantage if
%some additional restrictions are imposed.
The standard choice is to place an upper bound on the dimension of the system transmitted between the devices~\cite{SDI1,SDI2,SDI3}. This is often referred to as the semi-device-independent
(SDI) model for which several cryptographic protocols have been proposed~\cite{SDIprot1,SDIprot2,SDIprot3}.
In analogy to the DI model, it is clear that the security of SDI protocols is related to self-testing results in the prepare-and-measure scenario.

In this paper, we investigate the self-testing properties of the 
$2^{d} \to 1$ QRAC. In \cite{self-test_prepare}, the authors derive robust
self-testing results for $d = 2$ and ask whether similar statements hold for larger $d$.
%qubit case, and pose the question of considering higher dimensional systems.
We resolve this question by deriving a robust self-testing statement for arbitrary $d$. We show that the optimal performance in the $2^{d} \to 1$ QRAC certifies that the two measurements correspond to MUBs. To make the statement robust we propose new measures which characterise how close a pair of POVMs is to the MUB arrangement and derive explicit bounds on those in terms of the QRAC performance. Finally, we use this characterisation to obtain explicit bounds on operationally relevant quantities like the incompatibility robustness \cite{incomp} or the
amount of uncertainty produced. 

%Here, we present robust results in arbitrary dimension: first, we derive the  quantum upper bound on the average success probability (ASP) for the most general class of quantum measurements, not restricting to projective ones. It turns out that the general bound coincides with the one over projective measurements. Then, we show that achieving this bound implies the use of a pair of MUBs as measurements. Moreover, the measurements fully determine the states on the preparation side.
%
\section{Setup}
In the $2^{d} \to 1$ QRAC scenario (see Fig.~\ref{fig:qrac}), on the preparation
side Alice gets two uniformly random inputs, $i, j \in [d]$. Based on these
inputs she prepares a $d$-dimensional state $\rho_{ij}$, and sends it
to Bob who is on the measurement side. He gets a uniformly random input $y \in
\{1, 2\}$, which tells him which of Alice's inputs he is supposed to
guess. If $y = 1$, he aims to guess $i$, otherwise $j$.  This is performed by a
measurement on $\rho
_{ij}$, which we describe by the operators $\{A_i\}_i$
for $y = 1$, and $\{B_j\}_j$ for $y = 2$, where $A_i,B_j\ge0$,
$\sum_i{A_i}= \sum_j{B_j}=\unit$ and $i,j\in[d]$. The outcome of the measurement determines Bob's guess and the figure of merit is the \emph{average success probability} (ASP), which can be
written, using the above notation, as
\begin{equation}
\label{eq:aspform}
\bar{p} = \frac{1}{2 d^{2} } \sum_{ij} \tr \big[ \rho_{ij} ( A_{i} + B_{j} ) \big].
\end{equation}
%
%
%\begin{equation}
%\label{eq:asp}
%
%\bar{p} = \frac{1}{2d} \sum_{k = 1}^{d} \big[ \mathbb{P}( b = k| i = k, y = 1) + \mathbb{P}(b = k | j = k, y = 2) \big].
%
%\end{equation}
%
\begin{figure}[h!]
%\begin{center}
%
\includegraphics[height=3cm]{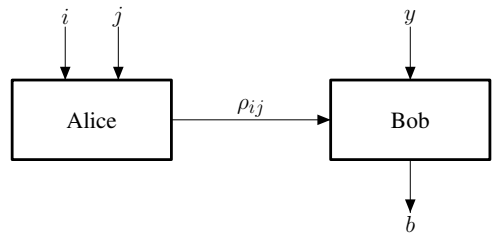}
\caption{Schematic representation of the $2^{d} \to 1$ QRAC protocol.}
%
%\end{center}
\label{fig:qrac}
\end{figure}

\section{Ideal self-test}
To obtain the ideal self-testing statement we derive an achievable upper bound on the ASP and identify situations in which all the steps in the derivation are tight.
%Considering all the above, the ASP can be written as:
%which we want to maximise under the constraints $\rho_{ij} \ge 0$, $\tr \rho_
%{ij} = 1$, $A_{i}, B_{j} \ge 0$, and $\sum_{i} A_{i} = \sum_{j} B_{j} = \unit$.
Note that $\tr \big[ \rho_{ij} ( A_{i} + B_{j} ) \big] \le \norm{ A_{i} + B_{j}
}$, where $\norm{.}$ is the operator norm, and since $(A_{i} + B_{j}) \ge 0$,
one can always find a state $\rho_{ij}$ such that this inequality is saturated. Let us from now on assume that the preparations are always chosen optimally, which allows us to focus solely on the measurements. Finding the maximal ASP can be performed using operator norm inequalities and other tools from matrix analysis, and yields the following theorem.
\begin{thm}
\label{thm:idealselftest}
The average success probability of the \mbox{$2^{d} \to 1$} QRAC is upper bounded by
\begin{equation}
\label{eq:aspbound}
\bar{p} \le \frac{1}{2} \left( 1 + \frac{1}{\sqrt{d}} \right) =: \bar{p}_{Q},
\end{equation}
and this bound can only be attained if Bob's measurements are rank-1 projective
and mutually unbiased. Moreover, in the optimal case the prepared states are the unique eigenstates of $A_{i} + B_{j}$, corresponding to the highest
eigenvalue.
\end{thm}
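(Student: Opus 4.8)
The plan is to reduce the optimisation to Bob's measurements and then sandwich $\sum_{ij}\norm{A_i+B_j}$ between a pointwise operator inequality and a single global Cauchy--Schwarz step. Using the reduction already noted above, optimal preparations saturate $\tr[\rho_{ij}(A_i+B_j)]=\norm{A_i+B_j}$, so that $\bar p=\frac{1}{2d^2}\sum_{ij}\norm{A_i+B_j}$ and it suffices to bound the right-hand side. The form of $\bar p_Q$ suggests comparing each $\norm{A_i+B_j}$ with $1$ plus an overlap term whose sum is controlled by the completeness relations $\sum_iA_i=\sum_jB_j=\unit$.

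The technical heart is the pointwise estimate
\[
\norm{A_i+B_j}\le 1+\norm{\sqrt{A_i}\sqrt{B_j}}\le 1+\sqrt{\tr(A_iB_j)},
\]
valid for all POVM elements $0\le A_i,B_j\le\unit$. To prove the first inequality I let $\ket{\psi}$ be a normalised top eigenvector of $A_i+B_j$ with eigenvalue $\lambda:=\norm{A_i+B_j}$, and set $\ket u=\sqrt{A_i}\ket\psi$, $\ket v=\sqrt{B_j}\ket\psi$. Then $\sqrt{A_i}\ket u+\sqrt{B_j}\ket v=(A_i+B_j)\ket\psi=\lambda\ket\psi$, and expanding the squared norm of the right-hand side gives $\lambda^2=\bramatketq{u}{A_i}+\bramatketq{v}{B_j}+2\Re\bramatket{u}{\sqrt{A_i}\sqrt{B_j}}{v}$. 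Using $\bramatketq{u}{A_i}\le\braketq{u}$ and $\bramatketq{v}{B_j}\le\braketq{v}$ (because $A_i,B_j\le\unit$), the matrix-element bound $\abs{\bramatket{u}{\sqrt{A_i}\sqrt{B_j}}{v}}\le\norm{\sqrt{A_i}\sqrt{B_j}}\sqrt{\braketq{u}\,\braketq{v}}$, and the identity $\braketq{u}+\braketq{v}=\bramatketq{\psi}{A_i+B_j}=\lambda$ together with $2\sqrt{\braketq{u}\,\braketq{v}}\le\braketq{u}+\braketq{v}=\lambda$, one obtains
\[
\lambda^2\le\lambda+2\norm{\sqrt{A_i}\sqrt{B_j}}\sqrt{\braketq{u}\,\braketq{v}}\le\lambda\big(1+\norm{\sqrt{A_i}\sqrt{B_j}}\big),
\]
and dividing by $\lambda$ yields the claim. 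The second inequality is just ``operator norm $\le$ Hilbert--Schmidt norm'', since $\norm{\sqrt{A_i}\sqrt{B_j}}^2\le\tr(\sqrt{A_i}\sqrt{B_j}\sqrt{B_j}\sqrt{A_i})=\tr(A_iB_j)$. Summing over $i,j$ and applying Cauchy--Schwarz to the $d^2$ overlap terms,
\[
\sum_{ij}\sqrt{\tr(A_iB_j)}\le\sqrt{\,d^2\sum_{ij}\tr(A_iB_j)\,}=\sqrt{d^2\,\tr\big[\big(\sum_iA_i\big)\big(\sum_jB_j\big)\big]}=\sqrt{d^2\cdot d}=d^{3/2},
\]
where I used $\sum_iA_i=\sum_jB_j=\unit$ and $\tr\unit=d$. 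Hence $\bar p\le\frac{1}{2d^2}\big(d^2+d^{3/2}\big)=\bar p_Q$, the stated bound.

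For the self-testing statement I would track the equality conditions, all of which must hold simultaneously for every pair $(i,j)$ when $\bar p=\bar p_Q$. Tightness of Cauchy--Schwarz forces $\tr(A_iB_j)$ to be independent of $i,j$; since $\sum_j\tr(A_iB_j)=\tr A_i$ and the total is $d$, this fixes $\tr(A_iB_j)=1/d$ and $\tr A_i=\tr B_j=1$. Tightness of the pointwise bound forces, via $\bramatketq{u}{A_i}=\braketq{u}$ and $A_i\le\unit$, the relation $A_i\ket u=\ket u$ with $\ket u=\sqrt{A_i}\ket{\psi_{ij}}\neq0$ (indeed $\braketq{u}=\bramatketq{\psi_{ij}}{A_i}>0$, else $\bramatketq{\psi_{ij}}{A_i+B_j}\le1<\lambda$); hence $A_i$ has eigenvalue $1$, and combined with $\tr A_i=1$ and $0\le A_i\le\unit$ this means all other eigenvalues vanish, i.e.\ $A_i=\ketbraq{a_i}$ is a rank-$1$ projector (and likewise $B_j=\ketbraq{b_j}$). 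Since $d$ rank-$1$ projectors summing to $\unit$ form orthonormal bases, the overlap condition becomes $\abs{\braket{a_i}{b_j}}^2=\tr(A_iB_j)=1/d$, which is precisely the MUB relation~\eqref{eq:MUB}. Finally, the optimal $\rho_{ij}$ must be supported on the top eigenspace of $A_i+B_j$; for rank-$1$ MUBs this operator acts as a $2\times2$ block on $\mathrm{span}\{\ket{a_i},\ket{b_j}\}$ with eigenvalues $1\pm1/\sqrt d$, whose leading eigenvalue is non-degenerate, so $\rho_{ij}$ is the unique corresponding eigenstate. I expect establishing the pointwise operator inequality and then combining the trace and eigenvalue equality constraints to extract rank-$1$ projectivity to be the main obstacles; the remaining steps are routine.
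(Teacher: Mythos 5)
Your proof is correct, but it takes a genuinely different route from the paper's, most visibly in the equality (self-testing) part. For the bound itself the paper uses Kittaneh's block-operator inequality $\norm{A+B}\le\max\{\norm{A},\norm{B}\}+\norm{\sqrt{A}\sqrt{B}}$, the Frobenius-norm estimate, and strict Schur-concavity of $\sum_{ij}\sqrt{t_{ij}}$ under $\sum_{ij}t_{ij}=d$; you prove the weaker (but here sufficient) pointwise bound $\norm{A_i+B_j}\le 1+\norm{\sqrt{A_i}\sqrt{B_j}}$ by expanding the top eigenvector, and you replace Schur-concavity with Cauchy--Schwarz, whose equality condition forces $t_{ij}=1/d$ just as well. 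The real divergence is in extracting rank-1 projectivity: the paper can only conclude directly that \emph{one} measurement is rank-1 projective (from $\max\{\norm{A_i},\norm{B_j}\}=1$ together with $\tr A_i=1$), and then needs substantial extra machinery --- the Barraa--Boumazgour characterisation of equality in the triangle inequality plus Kittaneh's numerical-radius bound, culminating in the lemma that equality requires $\norm{A_i}=\norm{B_j}$ --- to handle the \emph{other} measurement. In your derivation both come out simultaneously and symmetrically: once optimality forces $\lambda=\norm{A_i+B_j}=1+1/\sqrt{d}>1$, tightness of $\bramatketq{u}{A_i}\le\braketq{u}$ and of $\bramatketq{v}{B_j}\le\braketq{v}$ (with $\ket{u},\ket{v}\neq 0$ precisely because $\lambda>1$) gives $A_i\ket{u}=\ket{u}$ and $B_j\ket{v}=\ket{v}$, which with unit traces makes both operators rank-1 projectors at once. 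What your approach buys is a self-contained, elementary argument that bypasses three cited matrix-analysis results; what the paper's heavier toolkit buys is bounds that track \emph{how far} the measurements are from saturation (norm deficits and generalised overlaps), which is exactly what powers the robust self-test in Appendix B --- your equality analysis, as written, is inherently exact and would need reworking to yield quantitative robustness. A small bonus on your side: you make explicit the non-degeneracy of the top eigenvalue of $A_i+B_j$ (spectrum $1\pm 1/\sqrt{d}$ on the two-dimensional span, $0$ elsewhere), which the paper's appendix leaves implicit when asserting uniqueness of the optimal preparations.
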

It was previously known that this upper bound holds if we restrict ourselves to rank-1 projective measurements and that among these measurements only MUBs can actually achieve it \cite{QRACMUB}. What we show is that the QRAC performance cannot be improved by employing non-projective measurements and that the optimal performance indeed requires MUBs, even if we allow for generic measurements. Note that this does not follow from any extremality
argument, as in general projective measurements are not the only
extremal $d$-outcome measurements~\cite{POVMsimulating}.

For a complete proof, we refer the reader to Appendix A.
Here, we state that the crucial step is to use
operator norm inequalities to show that the ASP is bounded by
\begin{equation}
\label{eq:aspboundt}
\bar{p} \leq \frac{1}{2} + \frac{1}{2 d^{2}} \sum_{ij} \sqrt{ t_{ij} },
\end{equation}
where $t_{ij} := \tr(A_iB_j)\ge0$, and therefore $\sum_{ij}t_{ij}=d$.
The right-hand side is strictly Schur-concave in $\{ t_{ij} \}_{ij}$, and hence is uniquely maximised by
the uniform distribution, $t_{ij}=\frac1d$ \cite{inequalities}, which yields $\bar{p}_Q$. A separate argument implies that to reach $\bar{p}_Q$ both measurements must be rank-1 projective and combining these two facts leads to the conclusion that the two measurements correspond to MUBs.

Theorem~\ref{thm:idealselftest} implies that the $2^{d} \to 1$ QRAC is an SDI
self-test for a pair of MUBs in dimension $d$: observing the optimal ASP implies that the two measurements constitute a pair of MUBs.
%\red{and the preparations are determined accordingly} (they correspond to the \blue{unique eigenstates of $A_{i} + B_{j}$, corresponding to the largest
%eigenvalue.}
One might wonder whether the self-testing statement can be made even stronger, in the sense of providing more details about the measurements, but this is not possible. It is easy to check that every pair of MUBs is capable of producing the optimal ASP.
%
%This result then provides a tool for ensuring the security of SDI
%protocols, such as
%information locking \cite{locking2}, where for the optimal performance,
%two MUBs are employed for encoding classical information.
This ideal self-test is crucial for the success of the methods described in \cite{QRACMUB},
as there it is essential that the optimal QRAC ASP can only be obtained with an arbitrary pair of MUBs. 

\section{Robust self-test}
Since in a real experiment one never observes the optimal performance, the ideal
self-testing result is not sufficient. Instead, we need a \emph{robust}
self-testing statement, which tells us what can be certified in the case of
sub-optimal performance.

Inequality~\eqref{eq:aspboundt} implies that observing the optimal ASP forces
the distribution $\{ t_{ij} \}_{ij}$ to be uniform. For sub-optimal performance
we immediately get a bound on the $\frac{1}{2}$-R{\'e}nyi entropy ($H_
\frac12(\{q_i\}) = 2\log_2\left[\sum_i\sqrt{q_i}\right]$) of the distribution
$\{ t_{ij} / d \}_{ij}$, which we call the \emph{overlap entropy} $H_{S} (A, B) := H_{\frac{1}{2}} \big( \{ t_{ij} / d \}_{ij} \big)$. More concretely, 
from~\eqref{eq:aspboundt} we deduce that
\begin{equation}
\label{eq:H12bound}
H_{S}(A, B) \geq 2 \log_2 \big[ d \sqrt{d} ( 2 \bar{p} - 1 ) \big].
%
%H_{S}(A, B) \geq 2 \log \bigg[ 2d \sqrt{d} \bigg( \bar{p} - \frac{1}{2} \bigg) \bigg].
%
\end{equation}
This bound is non-trivial as long as $\bar{p}>\frac12+\frac{1}{2d\sqrt{d}}$ and observing $\bar{p} = \bar{p}_{Q}$ implies $H_{S}(A, B) = \log_2(d^2)$, which is the maximal value of the overlap entropy for a pair of POVMs. For $d = 4$ the lower bound is plotted in Fig.~\ref{fig:H12}.
\begin{figure}[h!]
\begin{center}
\includegraphics[width=0.9\columnwidth]{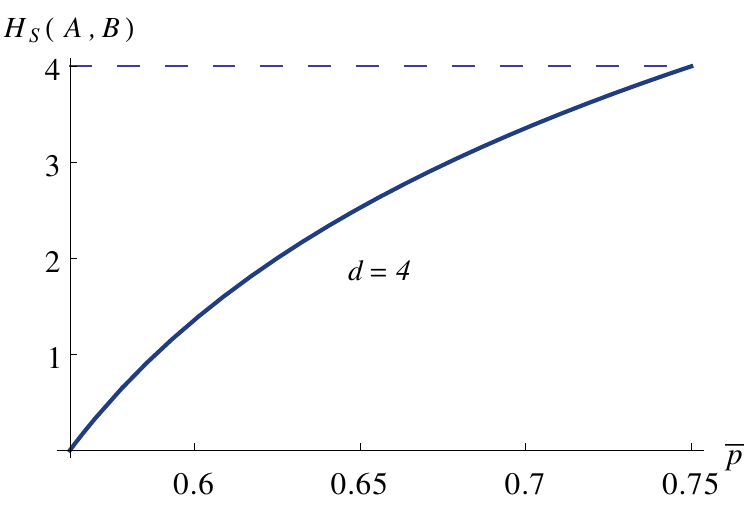}
\caption{Lower bound on the overlap entropy
for $\bar{p} \in[\frac12+\frac{1}{2d\sqrt{d}},\bar{p}_Q]$ in dimension 4.
\label{fig:H12}}
\end{center}
\end{figure}
%The optimal case certifies the flat distribution, i.e. $t_{ij}=\frac1d$
%for all $i,j$, while sub-optimal performances still certify the \blue{spread} of
%the $t_{ij}$.

Looking at the overlap entropy is not sufficient, because the maximal value can be achieved by measurements which are not MUBs, for instance the trivial measurements corresponding to $A_{i} = B_{j} = \unit/d$. The missing part is an argument showing that the measurements are close to being rank-1 projective. For a $d$-outcome measurement $\{ A_{i} \}_i$ acting on $\mathbb{C}^d$ this property can be assessed by looking at the sum of the norms, $N(A) := \sum_{i} \norm{ A_{i} }$, since for all measurements $N(A)\le d$ and the maximal value is attained if and only if the measurement is rank-1 projective. Therefore, saturating $N(A)=N(B)=d$ and
$H_S(A,B)=\log_2(d^2)$ certifies the MUB arrangement.

To obtain a bound on $N(A)$ we need a stronger version of Eq.~\eqref{eq:aspboundt}. In the Appendix B we show that
\begin{equation}
\label{eq:aspboundsn}
\bar{p} \leq \frac{1}{2} + \frac{1}{ 2 d^{2} } \sum_{ij} \big[ s_{ij} - ( 2 - \sqrt{2} ) s_{ij} n_{ij} \big],
\end{equation}
where $n_{ij} := 1 - \frac{1}{2} \big( \norm{ A_{i} } + \norm{ B_{j} } \big)$ and $s_{ij} := \norm{ \sqrt{A_{i} } \sqrt{ B_{j} } }$. This bound reduces to Eq.~\eqref{eq:aspboundt} if we omit the negative term and bound $s_{ij}$ by $\sqrt{t_{ij}}$, which constitutes an alternative derivation of Theorem~\ref{thm:idealselftest} (as $n_{ij}=0$ for all $i,j$ implies that both measurements are rank-1 projective).
%can be obtained from this bound as well, observing that saturating \eqref{eq:aspbound} requires $n_{ij}=0$ for all $i,j$.
%
%It is not clear what quantity is the most natural for characterising a pair of MUBs in arbitrary dimension, if we consider POVMs. Here we propose two measures that while separately do not, together they characterise \blue{closeness to} the MUB \blue{arrangement. Since the MUB definition implicitly assumes rank-1 projective measurements, we define} the sum of the norms, , and similarly for $B$. Note that the maximal value for $d$-outcome POVMs, $N(A)=d$, is only attained by rank-1 projective measurements. \blue{For characterising the spread of the overlaps, we define the} \emph{overlap entropy}, $H_S(A,B)=H_\frac12\left(\left\{\frac{t_{ij}}{d} \right\}_{ij}\right)$, where $H_\frac12$ is the $\frac12$-R\'enyi entropy. For a pair of MUBs, $H_S(A,B) := \log(d^2)$, the maximum attainable by $d$-outcome POVMs.
%
%\blue{In order to bound the above measures, we} introduce the \emph{norm deficiency},  \blue{Note that $n_{ij}=0$ for all $i,j$ iff the measurements are rank-1 projective, and in particular for MUBs, $s_{ij}=\frac{1}{\sqrt{d}}$ for all $i,j$.}

The important feature of Eq.~\eqref{eq:aspboundsn} is that it allows us to lower bound the sum of the norms. In Appendix B we show that for $\bar{p} > \bar{p}_{0} := \frac{1}{2} + \frac{1}{2d^2} \sqrt{ ( d^2-1) d }$ we have
%$ \geq q( \bar{p} )$, where
%
\begin{equation}
\label{eq:normsumbound}
N(A) \geq d - \frac{2 + \sqrt{2}}{d} \Big( 1 - \sqrt{ d^{3} ( 2 \bar{p} - 1 )^{2} -
( d^{2} - 1 ) } \Big)
\end{equation}
and by symmetry the same bound holds for $N(B)$. It is easy to check that for $\bar{p} = \bar{p}_{Q}$, the right-hand side evaluates to $d$, i.e.~the optimal performance certifies that both measurements are rank-1 projective. The lower bound given in Eq.~\eqref{eq:normsumbound} is plotted for $d = 4$ in Fig.~\ref{fig:normsum}.
\begin{figure}[h!]
\includegraphics[width=0.9\columnwidth]{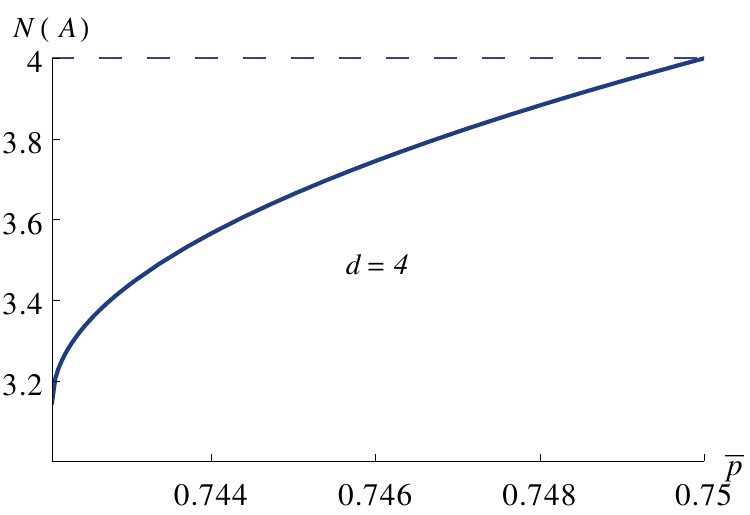}
\caption{Lower bound on the sum of the norms for $\bar{p} \in(\bar{p}_0,\bar{p}_Q]$
in dimension 4.}
\label{fig:normsum}
\end{figure}
%We see that the optimal performance $\bar{p} = \bar{p}_{Q}$ certifies rank-1 projective measurements, and any ASP within the above range certifies a bound on the norms, implying that the measurements are close to being rank-1 projective in this sense.

Since Eqs.~\eqref{eq:normsumbound} and~\eqref{eq:H12bound} allow us to robustly certify the two defining properties of MUBs (rank-1 projectivity and uniformity of overlaps, respectively), combining them yields a robust self-test for MUBs. Note that the robustness is limited by Eq.~\eqref{eq:normsumbound} which requires that $\bar{p} > \bar{p}_{0}$.

%We note here that another way of certifying the overlaps is
%to bound $\min_{ij}s_{ij}$ and $\max_{ij}s_{ij}$, which can also be done using
%our methods (see Appendix~\ref{app:robust}). For \blue{the optimal
%performance}, they both equal $\frac{1}{\sqrt{d}}$, implying that $s_{ij}=
%\frac{1}{\sqrt{d}}$ for all $i,j$.
%
%Having a QRAC performance with ASP $\bar{p}>\bar{p}_0$ then certifies both the norms and the overlap distribution of our effects in a robust way, and therefore implies that the measurements are close to a pair of MUBs.

\section{Operational bounds}
In the previous paragraph we have focused on quantities tailored to certify closeness to the MUB arrangement. Let us now show that a similar approach can be used to derive bounds on quantities which have an immediate operational meaning.
%
%The other quantity under our
%analysis
%is the amount of randomness produced by the measurements, for which it is known
%that MUBs give rise to the strongest entropic uncertainty relations for 
%projective measurements \cite{maassen_uffink}.

We begin with the incompatibility robustness. We say that two POVMs $\{ A_{i} \}_i$ and $\{ B_{j} \}_j$ are compatible 
(or jointly measurable) if there exists a parent POVM $\{M_{ij}\}_{ij}$, such
that $\sum_j M_{ij}=A_i$ and $\sum_i M_{ij}=B_j$  for all $i, j$. Otherwise they
are incompatible, which is often taken as the definition of non-classicality.
In order to quantify incompatibility beyond
this binary characterisation, the notion of incompatibility robustness has
been introduced \cite{incomp}. Consider the noisy POVMs, $A^\eta_i=\eta A_i+(1-\eta)\tr A_i \, \unit / d$, and similarly $B^\eta_j$. The incompatibility robustness $\eta^\ast$ of $A$ and $B$ is defined as the largest $\eta$ such that $\{A^\eta_i\}_i$ and $\{B^\eta_j\}_j$ are compatible. According to this measure MUBs are highly incompatible, but, perhaps surprisingly, they are not the most incompatible among rank-1 projective measurements in dimension $d$~\cite{mostincomp}.

Recently an analytic upper bound on $\eta^
{\ast}$ has been derived for an arbitrary set of POVMs~\cite{incompbound}. For a pair of POVMs the bound reads
\begin{equation}
\label{eq:incomp}
\eta^\ast\le\frac{d^2\max_{ij}\norm{A_i+B_j}-\sum_i\left(\tr A_i\right)^2
-\sum_j\left(\tr B_j\right)^2}{d\sum_i\tr A_i^2+d\sum_j
\tr B_j^2-\sum_i\left(\tr A_i\right)^2-\sum_j\left(\tr B_j \right)^2}.
\end{equation}
All the quantities appearing in this expression can be bounded using the previously developed methods, which leads to a bound which depends only on the observed performance $\bar{p}$. Since the final bound is rather complex, we do not present it here and refer the interested reader to Appendix C. The important feature of the bound is that for the optimal performance $\bar{p} = \bar{p}_{Q}$ we recover the correct value of the incompatibility robustness for a pair of MUBs, i.e.~$\eta^{\ast} = \frac{ \sqrt{d}/2 + 1}{ \sqrt{d} + 1 }$. In Fig.~\ref{fig:incomp} we plot the bound for $d = 4$ over the region where it is
non-trivial.

We note here that similar bounds can be derived for other measures of
incompatibility robustness using the same techniques. Among these is a measure
that uses arbitrary POVMs as noise \cite{Haa15}, for which MUBs are the most
incompatible pair of POVMs (of any number of outcomes) in dimension $d$
\cite{DFK19}. This can also be certified by observing $\bar{p} = \bar{p}_{Q}$.

%for details), \blue{to arrive at}
%\begin{equation}\label{eq:incomp}
%
%\eta^{\ast} \leq \frac{ \frac{1}{2} d^{2} ( 1 + \smax ) - \frac{q^{2}}{d} }{ q^{2} - d - (d - q)(d - q + 1) }.
%
%\end{equation}
%
%\blue{where $\smax=\max_{ij}s_{ij}$ (see Appendix \ref{app:robust}), while $q$
%is the right-hand side of Eq.~\eqref{eq:normsumbound}.}
%
\begin{figure}[h!]
\begin{center}
\includegraphics[width=0.9\columnwidth]{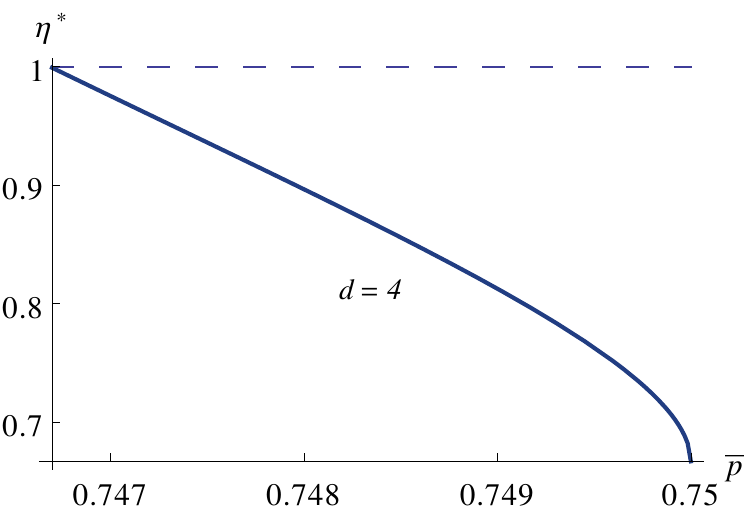}
\caption{Upper bound on the incompatibility robustness over the non-trivial region in dimension 4.
\label{fig:incomp}}
\end{center}
\end{figure}
%
%The robustness corresponding to MUBs, $\eta^\ast=\frac{\frac{\sqrt{d}}{2}+1} {\sqrt{d}+1}$ can be certified by observing the optimal performance \blue{($\smax=\frac{1}{\sqrt{d}}$ and $q=d$)}. In general, we can certify the amount of incompatibility of the employed measurements, upon observing a sufficiently high QRAC success probability.

%Another operational notion that is practical for randomness generation is entropic uncertainty.
The second operational quantity we consider is the amount of randomness produced by the uncharacterised measurements.
%More specifically, we derive a lower bound on the right-hand side of the Maassen-Uffink uncertainty relation~\cite{maassen_uffink}.}
%
For a POVM $A$, let $H(A)_{\rho} := H \big( \big\{ \tr( A_{i} \rho ) \big\}_{i}
\big)$ be the Shannon entropy of the outcome statistics of $A$ on the state
$\rho$. Maassen and Uffink derived a state-independent lower bound on $H(A)_{\rho} + H(B)_{\rho}$ for rank-1 projective measurements~\cite{maassen_uffink}. For our purposes we need a more general statement which covers non-projective measurements. Such a bound has been derived in~\cite{entropic_POVM} and reads
%
%Given two POVMs, $A$ and $B$, we can ask what is the minimal amount of randomness produced by their outcome statistics on any quantum state. Formally, let $H(A)_\rho=H\left(\left\{\tr(A_i\rho)\right\}_i \right)$ be the Shannon entropy of the outcome statistics of $A$ on the state $\rho$.
%It was shown in \cite{entropic_POVM}, that for any $\rho$ we have 
%
\begin{equation}
\label{eq:uncertainty}
H(A)_{\rho} + H(B)_{\rho} \geq - \log_2 c,
\end{equation}
where $c := \max_{ij} \norm{ \sqrt{ A_{i} } \sqrt{ B_{j} } }^{2}$.
%= \max_{ij}s_{ij}^2$.
Therefore, we need an upper bound on $s_{ij}$ and such a bound has already been derived in Appendix B. The final statement reads
\begin{equation}\label{eq:uncertaintybound}
\begin{split}
&\left.H(A)_{\rho} + H(B)_{\rho} \geq \right. \\
&\left. - 2 \log_2 \bigg( 2 \bar{p} - 1 + \frac{1}{d} \sqrt{ d (d^2-1) [1-d(2\bar{p} - 1)^2] } \bigg). \right.
\end{split}
\end{equation}
The optimal performance certifies $\log_2 d$ bits of randomness, which is the maximal value for a pair of projective measurements. We plot the above bound for $d = 4$ over the region where it is non-trivial in Fig.~\ref{fig:uncertainty}.
\begin{figure}[h!]
\begin{center}
\includegraphics[width=0.9\columnwidth]{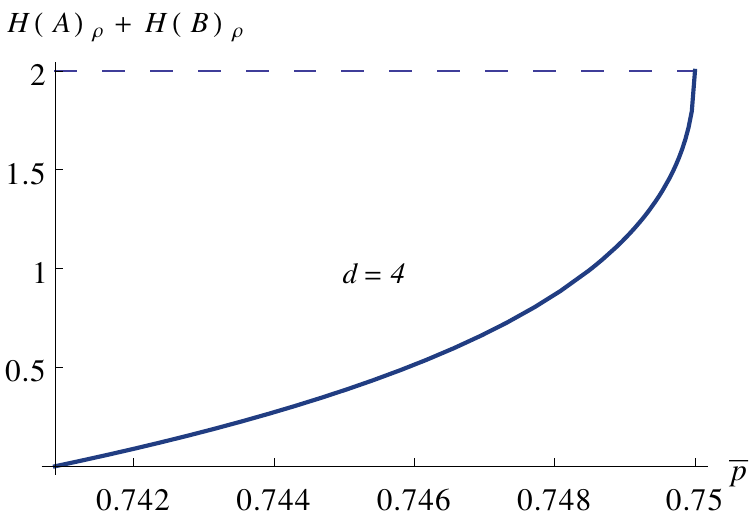}
\caption{Lower bound on the entropic uncertainty over the non-trivial region
in dimension 4.
\label{fig:uncertainty}}
\end{center}
\end{figure}

We note that a similar bound can be derived for the one-shot analogue of
the Shannon entropy, the min-entropy $H_\text{min}$ (which coincides with the
$\infty$-R\'enyi entropy), which is often preferred
in cryptographic scenarios. It was shown in
\cite{minentropy} that for a pair of POVMs, $H_\text{min}(A)_\rho + H_\text{min}
(B)_\rho \ge -\log_2\left(\frac{1+\sqrt{c}}{2}\right)$, for which we can derive a
similar bound to that of \eqref{eq:uncertaintybound}.

\section{Summary and outlook}
We have shown that the $2^{d} \to 1$ QRAC constitutes a robust self-test for MUBs in arbitrary dimension. Observing sufficiently high ASP allows us to deduce that the employed measurements are close to being rank-1 projective and that their overlaps are close to being uniform. The same approach can be used to bound operationally relevant quantities like the incompatibility robustness or the amount of randomness produced. For low dimensions the robustness of our bounds makes them interesting from the experimental point of view.
%We provide a robust self-test for MUBs in the prepare-and-measure scenario, which gives a certificate for the effect norms and overlaps in an experimentally viable region in low dimensions, assuming the system dimension. The results can be used to certify the incompatibility robustness of the pair of measurements employed, and to derive entropic uncertainty relations for them. It also supports a method proposed for solving the MUB existence problem.

The most obvious direction for further research is to use our self-testing results to prove SDI security of prepare-and-measure quantum key distribution using high-dimensional systems. One of the main components of the SDI security proof given in~\cite{SDIprot1} is the relation between the observed QRAC performance and the randomness produced for $d = 2$ (qubits). In this work we derive precisely such relations for arbitrary $d$ and we believe that one can use them directly in security proofs.

There is an important difference between SDI self-testing and DI self-testing. In the usual DI self-testing we certify systems up to local isometries and extra degrees of freedom. Since the second equivalence is not relevant in the SDI setup (the dimension of the system is fixed), one might expect that SDI self-testing should characterise the measurements up to a unitary transformation. However, this is generally not the case: while in some dimensions all pairs of MUBs are equivalent up to unitaries (and possibly complex conjugation), e.g.~$d = 2, 3, 5$, there are dimensions where this is not the case, e.g.~$d = 4$~\cite{MUB2-5}. It is natural to ask whether these inequivalent classes of MUBs can be distinguished by considering more complex QRACs.
%In these dimensions there exist multiple inequivalent classes of pairs of MUBs and it is natural to ask whether our SDI characterisation can distinguish them.
In fact, a related version of this question appears readily if we consider $n^{d} \to 1$ QRACs with $n > 2$. In this case it is known that different classes of $n$-tuples of MUBs perform differently~\cite{QRACMUB, nUB}. Numerical evidence for $n = 3$ and low $d$ suggests that the optimal performance is achieved by one of these classes, so one might conjecture that such QRACs self-test this particular class. Again, it is not clear how to certify the remaining classes.

The $2^{d} \to 1$ QRAC analysed in this paper is closely related, at least in spirit, to the family of Bell inequalities proposed by Bechmann-Pasquinucci and Gisin~\cite{Gisin}. We hope that the understanding gained in this work will help us to prove self-testing statements for those inequalities. It would be particularly interesting to see whether the need for ``more-than-unitary'' freedom can also appear in the standard nonlocality-based self-testing.

%Our setup differs from the usual self-testing \red{scenario} in two ways: firstly, we do not allow for extra degrees of freedom, as the dimension is fixed. Secondly, even the remaining unitary freedom is too strict for our purposes: here, what we certify is that the employed measurements are MUBs.
%In dimensions where there is only one equivalence class of pairs of MUBs, this means that the measurements are characterised up to a unitary transformation and a complex conjugation . In other dimensions, this freedom is even more general.
%
%An interesting further direction is to self-test different equivalence classes of sets of MUBs. It is pointed out in , that in , different equivalence classes perform differently. Numerical evidence suggests that
%
%\red{one of these classes} is \blue{a particular equivalence class of} MUBs \blue{is} optimal for the $3^d\to1$ case \blue{in any dimension.} This could probably be used to derive SDI self-tests for the optimal triple.
%However, this does not distinguish different equivalence classes of pairs, for which more \red{different protocols} are needed.

\section*{Acknowledgements}
We would like to thank Micha{\l} Oszmaniec for fruitful discussions. MF acknowledges support from the Polish NCN grant Sonata~UMO-2014/14/E/ST2/00020. JK acknowledges support from the National Science Centre, Poland (grant no.~2016/23/P/ST2/02122). This project is carried out under POLONEZ programme which has received funding from the European Union's Horizon 2020 research and innovation programme under the Marie Sk{\l}odowska-Curie grant agreement no.~665778.

%\bibliography{selfbib}
%
%\bibliographystyle{apsrev4-1}

\onecolumngrid

\appendix

\section{Ideal self-test}\label{app:ideal}

In the main text, we establish that the QRAC ASP can be upper bounded by
\begin{equation}\label{eq:appaspbound}
\bar{p}\le\frac{1}{2d^2}\sum_{ij}\norm{A_i+B_j},
\end{equation}
and this can always be saturated by suitable states $\rho_{ij}$ on the
preparation side. In order to bound the above quantity, we use a special case
of a matrix norm inequality derived by Kittaneh \cite{kittaneh}, applied to
the square-root function and the operator norm. For further purposes, we briefly
reproduce the proof here as well.
We will make use of the fact that for operators $A,B$ on a Hilbert space,
$\norm{A\oplus B}=\max\{\norm{A},\norm{B}\}$ \cite{bhatia}.

\begin{thm}\label{thm:Kittaneh}
Let $A,B\ge0$ be operators on a Hilbert space. Then $\norm{A+B}\le
\max\{\norm{A},\norm{B}\}+\norm{\sqrt{A}\sqrt{B}}$.
\end{thm}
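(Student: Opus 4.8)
The plan is to realise $\norm{A+B}$ as the norm of a $2\times2$ operator matrix whose off-diagonal blocks are exactly $\sqrt{A}\sqrt{B}$, so that the desired inequality follows from the triangle inequality for the operator norm. Concretely, I would introduce the column operator $T=\bigl(\begin{smallmatrix}\sqrt A\\ \sqrt B\end{smallmatrix}\bigr)$ mapping $H$ into $H\oplus H$. Since $\sqrt A,\sqrt B\ge0$ are self-adjoint, one has $T^\dagger=(\,\sqrt A\ \ \sqrt B\,)$, and a direct computation gives $T^\dagger T=A+B$ together with
\[
TT^\dagger=\begin{pmatrix}A & \sqrt A\sqrt B\\ \sqrt B\sqrt A & B\end{pmatrix}.
\]
The key observation is that $\norm{T^\dagger T}=\norm{TT^\dagger}=\norm{T}^2$, so $\norm{A+B}$ equals the norm of the block matrix above.

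Next I would split that block matrix into its diagonal and off-diagonal parts,
\[
\begin{pmatrix}A & \sqrt A\sqrt B\\ \sqrt B\sqrt A & B\end{pmatrix}=\begin{pmatrix}A & 0\\ 0 & B\end{pmatrix}+\begin{pmatrix}0 & \sqrt A\sqrt B\\ \sqrt B\sqrt A & 0\end{pmatrix},
\]
and apply the triangle inequality for $\norm{\cdot}$. The diagonal term is just $A\oplus B$, whose norm is $\max\{\norm A,\norm B\}$ by the direct-sum fact quoted above. For the off-diagonal term, writing $C:=\sqrt A\sqrt B$ so that $C^\dagger=\sqrt B\sqrt A$, the matrix $\bigl(\begin{smallmatrix}0 & C\\ C^\dagger & 0\end{smallmatrix}\bigr)$ is self-adjoint and its square equals $(CC^\dagger)\oplus(C^\dagger C)$; hence its norm squared is $\max\{\norm{CC^\dagger},\norm{C^\dagger C}\}=\norm C^2$, and the off-diagonal term contributes exactly $\norm{\sqrt A\sqrt B}$. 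Combining the two pieces yields the claimed bound.

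I expect the only genuine content to be the initial dilation step --- choosing $T$ so that $T^\dagger T$ recovers $A+B$ while $TT^\dagger$ produces the cross term $\sqrt A\sqrt B$ on the off-diagonal. Once this is in place, everything reduces to elementary norm identities for direct sums and for self-adjoint off-diagonal matrices, both of which follow from $\norm{M}^2=\norm{M^\dagger M}$ together with the quoted identity $\norm{X\oplus Y}=\max\{\norm X,\norm Y\}$. The one point to handle carefully is the self-adjointness of $\sqrt A$ and $\sqrt B$, which is what makes the computations of $T^\dagger T$ and $TT^\dagger$ valid; beyond guaranteeing the existence of these square roots, no further use of positivity is needed.
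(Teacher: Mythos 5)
Your proposal is correct and follows essentially the same route as the paper's proof: the same dilation $X=\bigl(\begin{smallmatrix}\sqrt A\\ \sqrt B\end{smallmatrix}\bigr)$ with $\norm{X^\dagger X}=\norm{XX^\dagger}$, the same diagonal/off-diagonal splitting of the block matrix, and the triangle inequality. The only difference is that you spell out the norm computation for the off-diagonal block (via $M^2=(CC^\dagger)\oplus(C^\dagger C)$), which the paper leaves as a cited ``basic property'' of the operator norm.
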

\begin{proof}
Consider the block-operator
\begin{equation}
X=\begin{pmatrix}
\sqrt{A} \\
\sqrt{B} 
\end{pmatrix}, \text{ and thus }
X^\dagger X=A+B.
\end{equation}
Therefore
\begin{equation}
\begin{split}
\norm{A+B} & \left. = \norm{X^\dagger X} = \norm{XX^\dagger} = \norm{
\begin{pmatrix}
A & \sqrt{A}\sqrt{B} \\
\sqrt{B}\sqrt{A} & B
\end{pmatrix}
}
= \norm{
\begin{pmatrix}
A & 0 \\
0 & B
\end{pmatrix}+
\begin{pmatrix}
0 & \sqrt{A}\sqrt{B} \\
\sqrt{B}\sqrt{A} & 0
\end{pmatrix}
} \right. \\
& \left. \le \norm{
\begin{pmatrix}
A & 0 \\
0 & B
\end{pmatrix}}+\norm{
\begin{pmatrix}
0 & \sqrt{A}\sqrt{B} \\
\sqrt{B}\sqrt{A} & 0
\end{pmatrix}} 
=  \max\{\norm{A},\norm{B}\}+\norm{\sqrt{A}\sqrt{B}}, \right.
\end{split}
\end{equation}
where we used some basic properties of the operator norm (see
e.g.~\cite{bhatia}; or \cite{kittaneh} for a more detailed and general version
of the proof).
\end{proof}

Using the above theorem, we get

\begin{equation}
\bar{p}\le\frac{1}{2d^2}
\sum_{ij}\Big(\max\{\norm{A_i},\norm{B_j}\} +\norm{\sqrt{A_i}\sqrt{B_j}}\Big).
\end{equation}
From $\sum_i{A_i}=\sum_j{B_j}=\unit$ it follows that $A_i,B_j\le\unit$, and thus
$\norm{A_i},\norm{B_j}\le1$. Then
\begin{equation}
\bar{p}\le\frac{1}{2d^2}\sum_{ij}\Big(1+\norm{\sqrt{A_i
}\sqrt{B_j}}\Big)=\frac12+\frac{1}{2d^2}\sum_{ij}\norm{\sqrt{A_i}\sqrt{B_j}}.
\end{equation}
Now we use the fact that for any operator $O$, $\norm{O}\le\norm{O}_F$, where
$\norm{O}_F := \sqrt{\tr(O^\dagger O)}$ is the Frobenius norm \cite{bhatia}.
Therefore
\begin{equation}\label{eq:tracebound}
%\begin{split}
\bar{p} \le\frac12+\frac{1}{2d^2}\sum_{ij}\norm{\sqrt{A_i} \sqrt{ B_j}}_F
%=\frac12+\frac{1}{2d^2}\sum_{ij}\sqrt{\tr\Big[\big(\sqrt{A_i
%}\sqrt{B_j}\big)^\dagger\sqrt{A_i}\sqrt{B_j}\Big]}
= \frac12+\frac{1}{2d^2}\sum_{ij}\sqrt{\tr(A_iB_j)}.
%\right. .
%\end{split}
\end{equation}
Recall that $t_{ij} := \tr(A_i B_j)$ and, therefore, $t_{ij} \geq 0$ and $\sum_{ij}t_{ij}=d$. The right-hand side of Eq.~\eqref{eq:tracebound} is a symmetric and strictly concave function of
the $t_{ij}$, and as such, it is strictly Schur-concave
(see e.g.~\cite{inequalities}). Therefore, it is maximised {\em uniquely}
by setting all the $t_{ij}$ uniform, $t_{ij}=\frac1d$ for all $i,j\in[d]$. The
upper bound on the ASP set by such $t_{ij}$ is then
\begin{equation}
\bar{p}\le\frac12+\frac{1}{2d^2}\sum_{ij}\frac{1}{\sqrt{d}}=\frac12\Big(1+
\frac{1}{\sqrt{d}}\Big).
\end{equation}
Note that this bound is saturated by measuring in MUBs (see also
\cite{QRACMUB}).

Now, let us turn our attention to necessary conditions for saturating the above
bound. We first show that at least one of the measurements must be rank-1
projective in order to reach the optimal ASP. Saturating the upper bound requires $\tr(A_iB_j)=
\frac1d$ for all $i,j
\in[d]$ and by summing over one of the indices, we see that $\tr A_i
=\tr B_j=1$ for all $i,j$. Investigating the chain of
inequalities obtained above, it is necessary for optimality that 
$\max\{\norm{A_i},\norm{B_j}\}=1$ for all $i,j\in[d]$,
otherwise $\bar{p}<\frac{1}{2d^2}\sum_{ij}(1+\norm{\sqrt{A_i}\sqrt
{B_j}}) \le\frac12\Big(1+\frac{1}{\sqrt{d}}\Big)$.
Assume that there exists a $j^\ast$ such that $\norm{B_{j^\ast}}<1$. Then
in order to fulfil
$\max\{\norm{A_i},\norm{B_{j^{*}}}\}=1$ for all $i \in [d]$, it
%$\max\{\norm{A_i},\norm{B_j}\}=1$ for all $i,j\in[d]$, it
is necessary that $\norm{A_i}=1$ for all $i\in[d]$. Since these
operators must all be trace-1 and positive semi-definite, it follows that
$A_i=\ketbraq{a_i}$ for all $i\in[d]$. If there is no such $j^\ast$, then
$\norm{B_j}=1$ for all $j\in[d]$, and we arrive at an analogous condition
for $B_j$. Thus, without loss of generality we can assume that $A_i=\ketbraq{a_i}$ for all $i
\in[d]$.

The rest of this appendix is dedicated to showing that the other measurement must also be rank-1 projective. Let us analyse the inequality derived by Kittaneh and in order to do so,
we first recall a few definitions from matrix analysis. We denote by $\mathcal{L}
(\mathcal{H})$ the algebra of linear operators on the Hilbert space
$\mathcal{H}$, and by $\norm{.}_\mathcal{H}$ the Hilbert space norm. The
numerical range of an operator $O$ is $W(O) := \{\braket{x}{Ox}~|~\norm{x}_
\mathcal{H}=1\}$, while the numerical radius is $w(O) := \sup_ {\norm{x}_
\mathcal{H}=1} \abs{\braket{x}{Ox}}$. By construction every complex number $c \in W(O)$ satisfies $\abs{c} \leq w(O)$ and we always have $w(O)\le\norm{O}$
\cite{bhatia}.

In Theorem~\ref{thm:Kittaneh}, the inequality comes from the triangle inequality and to investigate when this holds as an equality we use a result by Barraa and Boumazgour~\cite{triangle}.

\begin{thm}\label{thm:triangle}
Let $S,T\in\mathcal{L}(\mathcal{H})$ be non-zero. Then the equation $\norm{S+T}
=\norm{S}+\norm{T}$ holds if and only if $\norm{S}\norm{T}\in\overline{W(
S^\dagger T)}$.
\end{thm}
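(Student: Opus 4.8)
The plan is to characterise equality in the operator triangle inequality through \emph{approximate} maximisers of the operator norm, since in infinite dimensions the supremum defining $\norm{\cdot}$ need not be attained and $W(S^\dagger T)$ need not be closed. Throughout I would fix a sequence of unit vectors $\{x_n\}$ and track how tightness of $\norm{S+T}=\norm{S}+\norm{T}$ propagates first to tightness of the \emph{vector} triangle inequality for $Sx_n$ and $Tx_n$, and then to the inner product $\braket{Sx_n}{Tx_n}=\braket{x_n}{S^\dagger T x_n}\in W(S^\dagger T)$. Both implications are then limiting statements about this inner product.

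For the forward direction, assume $\norm{S+T}=\norm{S}+\norm{T}$ and take unit vectors $x_n$ with $\norm{(S+T)x_n}\to\norm{S+T}$. Sandwiching $\norm{(S+T)x_n}\le\norm{Sx_n}+\norm{Tx_n}\le\norm{S}+\norm{T}$ forces $\norm{Sx_n}\to\norm{S}$ and $\norm{Tx_n}\to\norm{T}$, and also that the vector triangle inequality for $(Sx_n,Tx_n)$ becomes asymptotically tight. Comparing $\norm{Sx_n+Tx_n}^2$ with $(\norm{Sx_n}+\norm{Tx_n})^2$ then yields $\Re\braket{Sx_n}{Tx_n}\to\norm{S}\norm{T}$; combined with the Cauchy--Schwarz bound $\abs{\braket{Sx_n}{Tx_n}}\le\norm{Sx_n}\norm{Tx_n}\le\norm{S}\norm{T}$, this pins the imaginary part to zero, so $\braket{x_n}{S^\dagger T x_n}\to\norm{S}\norm{T}$ and hence $\norm{S}\norm{T}\in\overline{W(S^\dagger T)}$.

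For the converse, assume $\norm{S}\norm{T}\in\overline{W(S^\dagger T)}$ and pick unit vectors $x_n$ with $\braket{x_n}{S^\dagger T x_n}=\braket{Sx_n}{Tx_n}\to\norm{S}\norm{T}$. Since $\abs{\braket{Sx_n}{Tx_n}}\le\norm{Sx_n}\norm{Tx_n}\le\norm{S}\norm{T}$, the product $\norm{Sx_n}\norm{Tx_n}$ is squeezed up to $\norm{S}\norm{T}$, which again forces $\norm{Sx_n}\to\norm{S}$ and $\norm{Tx_n}\to\norm{T}$. Expanding $\norm{(S+T)x_n}^2=\norm{Sx_n}^2+\norm{Tx_n}^2+2\Re\braket{Sx_n}{Tx_n}$ then gives $\norm{(S+T)x_n}\to\norm{S}+\norm{T}$, so $\norm{S+T}\ge\norm{S}+\norm{T}$, while the reverse inequality is the ordinary triangle inequality; equality follows.

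The main obstacle is purely the infinite-dimensional bookkeeping: because neither the operator-norm supremum nor the numerical range is guaranteed to be attained or closed, one cannot argue with a single optimal vector but must instead carry approximate-maximiser sequences through every step and take limits with care. This is precisely why the statement features $\overline{W(S^\dagger T)}$ rather than $W(S^\dagger T)$; in finite dimensions the same argument collapses to a cleaner version with an actual maximiser and the bare numerical range.
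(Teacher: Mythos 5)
Your proof is correct, but there is nothing in the paper to compare it against: the paper does not prove Theorem~\ref{thm:triangle} at all. It imports the statement verbatim from Barraa and Boumazgour~\cite{triangle} and only uses its consequence --- that saturating the triangle inequality forces $w(S^\dagger T)=\norm{S}\norm{T}$ --- in the proof of Lemma~\ref{lem:Kittaneh_sat}. Your approximate-maximiser argument is therefore a genuine, self-contained addition, and it is sound in both directions: the squeeze $\norm{(S+T)x_n}\le\norm{Sx_n}+\norm{Tx_n}\le\norm{S}+\norm{T}$ does force $\norm{Sx_n}\to\norm{S}$ and $\norm{Tx_n}\to\norm{T}$; expanding $\norm{(S+T)x_n}^2$ and invoking Cauchy--Schwarz correctly pins $\braket{Sx_n}{Tx_n}\to\norm{S}\norm{T}$, hence $\norm{S}\norm{T}\in\overline{W(S^\dagger T)}$; and in the converse you rightly use $S,T\neq 0$ to split the limit of the product $\norm{Sx_n}\,\norm{Tx_n}$ into the two individual limits before reversing the expansion to get $\norm{S+T}\ge\norm{S}+\norm{T}$. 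Your closing remark also matches how the result is deployed here: immediately after stating the theorem, the paper notes that in finite dimension the numerical range is closed, so the closure is redundant in its application --- exactly the simplification that your sequence bookkeeping is designed to make unnecessary in the general Hilbert-space setting.
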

For a finite-dimensional Hilbert space the numerical range is always closed
\cite{bhatia}, thus in our case the closure in the theorem is redundant. It is immediate to see that a necessary condition for the operators $S$
and $T$ to saturate the triangle inequality is that $\norm{S}\norm{T}\le
w(S^\dagger T)$. On the other hand, from the submultiplicativity of the operator norm, we know that $w(S^\dagger T)\le\norm{S^\dagger T}\le\norm{S^\dagger}\norm{T}=\norm{S}
\norm{T}$, and hence this condition is equivalent to $w(S^\dagger T)=\norm{S}
\norm{T}$.

We
will also use the following bound on the numerical radius, obtained by Kittaneh
\cite{kittaneh2}.
\begin{thm}\label{thm:Kittanehradius}
If $O\in\mathcal{L}(\mathcal{H})$, then
\begin{equation}
\big(w(O)\big)^2\le\frac12\norm{O^\dagger O+OO^\dagger}.
\end{equation}
\end{thm}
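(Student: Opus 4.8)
The plan is to reduce this operator inequality to a pointwise estimate on unit vectors. First I would fix a unit vector $\ket{x}$, set $z := \braket{x}{Ox}$, and write $z = \abs{z} e^{i\phi}$ with $\phi = \arg z$. Since $\abs{z} = e^{-i\phi} z$ is real and non-negative, it can be realised as the expectation value of a Hermitian operator, namely $\abs{z} = \braket{x}{Kx}$, where $K := \tfrac12(e^{-i\phi} O + e^{i\phi} O^\dagger) = \Re(e^{-i\phi}O)$ is the Hermitian part of $e^{-i\phi}O$. Because $\ket{x}$ is normalised and $K^\dagger = K$, the Cauchy--Schwarz inequality gives $\abs{z}^2 = \braket{x}{Kx}^2 \le \braket{x}{K^2 x}$. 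It therefore suffices to bound $\braket{x}{K^2 x}$ by $\tfrac12 \norm{O^\dagger O + O O^\dagger}$ uniformly in $\ket{x}$ (hence in $\phi$); taking the supremum over unit vectors then yields $\big(w(O)\big)^2 \le \tfrac12\norm{O^\dagger O + O O^\dagger}$.

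The key step is the expansion
\begin{equation}
K^2 = \tfrac14\big(O^\dagger O + O O^\dagger\big) + \tfrac14\big(e^{-2i\phi} O^2 + e^{2i\phi} (O^\dagger)^2\big),
\end{equation}
which separates a phase-independent part from a phase-dependent one. The first term contributes at most $\tfrac14\braket{x}{(O^\dagger O + O O^\dagger)x}$, and the second equals $\tfrac12\Re\big(e^{-2i\phi}\braket{x}{O^2 x}\big)$, so it is bounded in modulus by $\tfrac12\abs{\braket{x}{O^2 x}}$. The crux is to control this off-diagonal piece by the very same positive operator. To do so I would write $\braket{x}{O^2 x} = \braket{O^\dagger x}{Ox}$ and apply Cauchy--Schwarz followed by AM--GM:
\begin{equation}
\abs{\braket{x}{O^2 x}} \le \norm{O^\dagger x}_\mathcal{H}\,\norm{Ox}_\mathcal{H} \le \tfrac12\big(\norm{Ox}_\mathcal{H}^2 + \norm{O^\dagger x}_\mathcal{H}^2\big) = \tfrac12 \braket{x}{(O^\dagger O + O O^\dagger)x},
\end{equation}
so that the phase-dependent term is at most $\tfrac14\braket{x}{(O^\dagger O + O O^\dagger)x}$.

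Combining the two contributions gives $\braket{x}{K^2 x} \le \tfrac12\braket{x}{(O^\dagger O + O O^\dagger)x} \le \tfrac12\norm{O^\dagger O + O O^\dagger}$, the last step using that $O^\dagger O + O O^\dagger$ is positive semi-definite, so its expectation on a unit vector is bounded by its norm. Chaining with $\abs{z}^2 \le \braket{x}{K^2 x}$ and taking the supremum over unit $\ket{x}$ finishes the argument. The part demanding the most care is the realisation that the cross term $e^{-2i\phi}O^2 + e^{2i\phi}(O^\dagger)^2$, which a priori carries factors of $O^2$ and could be as large as $\norm{O}^2$, can be reabsorbed into the operator $O^\dagger O + O O^\dagger$ already present on the right-hand side; the identity $\braket{x}{O^2 x} = \braket{O^\dagger x}{Ox}$ together with AM--GM is precisely what makes this reabsorption possible and what pins down the sharp constant $\tfrac12$.
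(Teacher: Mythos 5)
Your proof is correct, and each step checks out: the realisation $\abs{z} = \bramatketq{x}{K}$ with the Hermitian operator $K = \Re(e^{-i\phi}O)$, the Cauchy--Schwarz bound $\bramatketq{x}{K}^{2} \le \bramatketq{x}{K^{2}}$, the exact expansion $K^{2} = \frac{1}{4}\big(O^{\dagger}O + OO^{\dagger}\big) + \frac{1}{4}\big(e^{-2i\phi}O^{2} + e^{2i\phi}(O^{\dagger})^{2}\big)$, and the reabsorption of the phase-dependent term via $\bramatketq{x}{O^{2}} = \braket{O^{\dagger}x}{Ox}$ followed by Cauchy--Schwarz and AM--GM. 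Be aware, however, that there is no in-paper argument to compare yours against: the paper states Theorem~\ref{thm:Kittanehradius} without proof, importing it from Kittaneh~\cite{kittaneh2} as a black box used inside the proof of Lemma~\ref{lem:Kittaneh_sat} (in contrast with Theorem~\ref{thm:Kittaneh}, whose proof the appendix does reproduce). Your argument therefore supplies something the paper deliberately omitted, and it does so in a rather elementary way: the standard proofs in the literature either pass through the operator absolute values $\sqrt{O^{\dagger}O}$ and $\sqrt{OO^{\dagger}}$ via the mixed Schwarz inequality, or work at the operator level with $w(O) = \sup_{\theta}\norm{\Re(e^{i\theta}O)}$ and control the cross term $\Re(e^{2i\theta}O^{2})$ using the power inequality $w(O^{2}) \le w(O)^{2}$, which is itself a nontrivial result. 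Your vector-level handling of the cross term needs nothing beyond basic Hilbert-space inequalities and still yields the sharp constant $\frac{1}{2}$, so the proof is fully self-contained. The one point to keep explicit --- which you do flag --- is that the phase $\phi$, and hence $K$, depends on the unit vector $x$; this is harmless precisely because your final bound $\frac{1}{2}\norm{O^{\dagger}O + OO^{\dagger}}$ is uniform in $\phi$, so the supremum over $x$ can be taken at the very end.
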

We are now ready to derive a necessary condition to saturate Kittaneh's inequality in Theorem
\ref{thm:Kittaneh}.
\begin{lem}\label{lem:Kittaneh_sat}
Let $A,B\ge0$ be operators on a Hilbert space. Then, the equality $\norm{A+B}=
\max\{\norm{A},\norm{B}\}+\norm{\sqrt{A}\sqrt{B}}$ holds only if $\norm{A}=\norm{B}$.
\end{lem}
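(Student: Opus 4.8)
The plan is to read the claimed identity as a saturated triangle inequality and then run the numerical-radius estimates prepared above. Following the proof of Theorem~\ref{thm:Kittaneh}, write $C := \sqrt{A}\sqrt{B}$ and set
\begin{equation}
S := \begin{pmatrix} A & 0 \\ 0 & B \end{pmatrix}, \qquad T := \begin{pmatrix} 0 & C \\ C^\dagger & 0 \end{pmatrix},
\end{equation}
both Hermitian, with $\norm{S} = \max\{\norm{A},\norm{B}\}$ and $\norm{T} = \norm{C} = \norm{\sqrt{A}\sqrt{B}}$ (these two norm evaluations already appear in the proof of Theorem~\ref{thm:Kittaneh}). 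The hypothesis is then exactly $\norm{S+T} = \norm{S} + \norm{T}$. I would assume $\sqrt{A}\sqrt{B} \neq 0$, so that $T \neq 0$ and Theorem~\ref{thm:triangle} applies; the degenerate case must be set aside, since there the stated implication can actually fail. By Theorem~\ref{thm:triangle} and the discussion following it, saturation forces $w(S^\dagger T) = \norm{S}\norm{T}$, and as $S$ is Hermitian this reads $w(ST) = \norm{S}\norm{T}$.

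The second step is to feed $O := ST$ into Theorem~\ref{thm:Kittanehradius}. Since $S$ and $T$ are Hermitian, $(ST)^\dagger = TS$, and a short block computation gives the block-diagonal forms $(ST)^\dagger(ST) = TS^2T$ and $(ST)(ST)^\dagger = ST^2S$, so that
\begin{equation}
O^\dagger O + O O^\dagger = \begin{pmatrix} CB^2C^\dagger + ACC^\dagger A & 0 \\ 0 & C^\dagger A^2 C + BC^\dagger CB \end{pmatrix}.
\end{equation}
Its norm is the larger of the two diagonal blocks, and each block is a sum of two positive semidefinite pieces that I would bound separately. Using $A^2 \le \norm{A}^2\unit$ and $B^2 \le \norm{B}^2\unit$ together with $\norm{CC^\dagger} = \norm{C^\dagger C} = \norm{C}^2 = \norm{T}^2$, each of the four pieces is controlled by $\norm{A}^2\norm{T}^2$ or $\norm{B}^2\norm{T}^2$, whence $\norm{O^\dagger O + O O^\dagger} \le (\norm{A}^2 + \norm{B}^2)\norm{T}^2$.

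Combining this with the saturation condition of the first step gives
\begin{equation}
\max\{\norm{A},\norm{B}\}^2\,\norm{T}^2 = w(ST)^2 \le \tfrac{1}{2}\big(\norm{A}^2 + \norm{B}^2\big)\norm{T}^2 .
\end{equation}
Dividing by $\norm{T}^2 > 0$ leaves $\max\{\norm{A},\norm{B}\}^2 \le \tfrac{1}{2}(\norm{A}^2 + \norm{B}^2)$; since the opposite inequality is automatic (a maximum of two squares is never below their mean), both must be equalities, which happens only for $\norm{A} = \norm{B}$. I expect the block-operator estimate to be the delicate point: the bounds must be organised so that the constant surviving after Theorem~\ref{thm:Kittanehradius} is the \emph{average} $\tfrac12(\norm{A}^2+\norm{B}^2)$ rather than merely $\max\{\norm{A},\norm{B}\}^2$, because it is precisely this gap between mean and maximum that pins down $\norm{A}=\norm{B}$.
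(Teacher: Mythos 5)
Your proof is correct and follows essentially the same route as the paper's: the same block operators $S$ and $T$, Theorem~\ref{thm:triangle} to convert saturation of the triangle inequality into $w(ST)=\norm{S}\norm{T}$, Theorem~\ref{thm:Kittanehradius} applied to $O=ST$ (your bound via $A^{2}\le\norm{A}^{2}\unit$, $B^{2}\le\norm{B}^{2}\unit$ is just a repackaging of the paper's submultiplicativity step), and the same mean-versus-maximum comparison at the end. Your explicit exclusion of the degenerate case $\sqrt{A}\sqrt{B}=0$ is a point the paper passes over silently --- and you are right that the implication genuinely fails there (e.g.~$A=\ketbraq{0}$, $B=\tfrac12\ketbraq{1}$ saturate the equality with $\norm{A}\neq\norm{B}$), though this is harmless in the paper's application, where saturation forces $\tr(A_iB_j)=1/d>0$ and hence $\sqrt{A_i}\sqrt{B_j}\neq0$.
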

\begin{proof}
Let us denote the block-operators appearing in the proof of Theorem
\ref{thm:Kittaneh} by:
\begin{equation}
S=\begin{pmatrix}
A & 0 \\
0 & B
\end{pmatrix} =S^\dagger,~~~
T=\begin{pmatrix}
0 & \sqrt{A}\sqrt{B} \\
\sqrt{B}\sqrt{A} & 0
\end{pmatrix} =T^\dagger.
\end{equation}
Then, following from Theorem \ref{thm:triangle} and the discussion below
it, a necessary condition for
$A,B\ge0$ to saturate Kittaneh's inequality is that 
$w(ST)=\norm{S}\norm{T}= \max\{\norm{A}, \norm{B}\}\norm{\sqrt{A}
\sqrt{B}}$.

Applying Theorem \ref{thm:Kittanehradius} to $ST$, we get that
\begin{equation}
\begin{split}
(ST)^\dagger ST & \left. =
\begin{pmatrix}
\sqrt{A}B^3\sqrt{A} & 0 \\
0 & \sqrt{B}A^3\sqrt{B}
\end{pmatrix}, \right. \\
ST(ST)^\dagger & \left. =
\begin{pmatrix}
A^{\frac32}BA^{\frac32} & 0 \\
0 & B^{\frac32}AB^{\frac32}
\end{pmatrix}, \right.
\end{split}
\end{equation}
and hence
\begin{equation}
\begin{split}
\big(w(ST)\big)^2 & \left.
\le\frac12\max\Big\{\norm{\sqrt{A}B^3\sqrt{A}
+A^{\frac32}BA^{\frac32}},\norm{\sqrt{B}A^3\sqrt{B}+B^{\frac32}
AB^{\frac32}}\Big\} \right. \\
& \left. \le \frac12\max\Big\{\norm{\sqrt{A}B^3\sqrt{A}}+\norm{A^{\frac32}B
A^{\frac32}},\norm{\sqrt{B}A^3\sqrt{B}}+\norm{B^{\frac32}AB^{\frac32}}\Big\}
\right. \\
& \left. = \frac12\max\Big\{\norm{\sqrt{A}B^{\frac32}}^2+\norm{A^{\frac32}
\sqrt{B}}^2,\norm{A^{\frac32}\sqrt{B}}^2+\norm{\sqrt{A}B^{\frac32}}^2\Big\}
\right. \\
& \left. = \frac12\Big(\norm{A^{\frac32}\sqrt{B}}^2+\norm{\sqrt{A}B^{\frac32}}^2
\Big)\le\frac12\Big(\norm{A}^2+\norm{B}^2\Big)\norm{\sqrt{A}\sqrt{B}}^2
\right. \\
& \left. \le\max\big\{\norm{A}^2,\norm{B}^2\big\}\norm{\sqrt{A}\sqrt{B}}^2.
\right.
\end{split}
\end{equation}

Here, in the second line, we used the triangle inequality, in the third line 
the identity $\norm{O}^{2}=\norm{O^\dagger O}$ and in the fourth line submultiplicativity. The last inequality is trivial, and is only saturated if
$\norm{A}=\norm{B}$. Therefore, $\norm{A+B}=
\max\{\norm{A},\norm{B}\}+\norm{\sqrt{A}\sqrt{B}}$ only if $\norm{A}=\norm{B}$.
\end{proof}

This lemma shows that saturating the upper bound on the ASP implies that $\norm{B_j}=\norm{A_i} = 1$ for all $i,j\in[d]$. It was also necessary that $\tr B_j=1$, and
therefore (similarly to the $A_i$), $B_j=\ketbraq{b_j}$ for all $j\in[d]$,
and both measurements must be rank-1 projective. From here, it follows
immediately from the condition $\tr(A_iB_j)=\frac1d$, that the bases
defining the measurements must be mutually unbiased.

\section{Robust self-test}
\label{app:robust}
While it is clear what it means for two measurements to be \emph{exactly} mutually unbiased, there are multiple ways of turning this definition into an approximate statement (particularly if we allow for non-projective measurements). For our purposes it is natural to split the definition of MUBs into two stand-alone conditions and consider them separately.

The first condition, which is usually implicit in the definition of MUBs, is that both measurements are projective and that the measurement operators are rank-1. Let $\{ A_{i} \}_{i}$ be a $d$-outcome measurement on a $d$-dimensional system and let us consider the sum of the norms, $N(A) := \sum_{i} \norm{ A_{i} }$. This is a suitable quantity, because
\begin{equation*}
N(A) = \sum_{i} \norm{ A_{i} } \leq \sum_{i} \tr A_{i} = d
\end{equation*}
and since $\norm{ A_{i} } \leq 1$, the maximum is achieved iff every measurement operator is a rank-1 projector. Therefore, the difference between $\sum_{i} \norm{ A_{i} }$ and the maximal value $d$ tells us how much $\{ A_{i} \}_{i}$ deviates from being rank-1 projective.

The second condition, often referred to as \emph{the} MUB condition, requires that the overlap between every pair of measurement operators is the same. The question here is how to generalise the overlap to non-projective measurements. The quantity $\sqrt{ \tr ( A_{i} B_{j} ) }$ discussed in the main text is a valid generalisation of the overlap in the sense that it reduces to the overlap for rank-1 projective measurements. However, the argument given below naturally leads to a different quantity, namely $\norm{ \sqrt{A_{i}} \sqrt{B_{j}} }$. Note that this is a commonly used definition of the overlap, e.g.~in the context of uncertainty relations.

The main purpose of this appendix is to derive a lower bound on $N(A)$ as a function of the observed performance. However, in order to do that, we must first derive explicit bounds on the range of $\norm{ \sqrt{A_{i}} \sqrt{B_{j}} }$.

%For the purpose of this appendix, however, it is more suitable  and then we require that
%
%\begin{equation*}
%
%\norm{ \sqrt{A_{i}} \sqrt{B_{j}} } = \frac{1}{\sqrt{d}}
%
%\end{equation*}
%
%for all $i, j \in [d]$.
%
In our argument we use the following technical lemma.
\begin{lem}
\label{eq:random-inequality}
The function
\begin{equation*}
h(x, y) := x + y - \alpha x y - \sqrt{ x^{2} + y^{2} }
\end{equation*}
for $\alpha := 2 - \sqrt{2}$ satisfies $h(x, y) \geq 0$ for $x, y \in [0, 1]$.
\end{lem}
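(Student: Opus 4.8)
The plan is to reduce the two-variable inequality to a transparent factored form by switching to the symmetric variables $u := x + y$ and $v := \sqrt{x^2 + y^2}$. The key observation is that $h$ depends on $x$ and $y$ only through $u$ and $v$: since $2xy = (x+y)^2 - (x^2+y^2) = u^2 - v^2$, the cross term becomes $\alpha x y = \tfrac{\alpha}{2}(u^2 - v^2)$, and the whole function collapses to
\[ h = (u - v) - \tfrac{\alpha}{2}(u^2 - v^2) = (u - v)\Bigl(1 - \tfrac{\alpha}{2}(u + v)\Bigr). \]
Thus the problem splits into controlling the sign of each factor separately on the square $[0,1]^2$.

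For the first factor, I would note that for $x, y \ge 0$ we have $u^2 = v^2 + 2xy \ge v^2$, hence $u \ge v$ and $u - v \ge 0$. For the second factor, observe that $u + v = x + y + \sqrt{x^2+y^2}$ is strictly increasing in each of $x$ and $y$ on $[0,1]^2$, so it attains its maximum at the corner $(1,1)$, where it equals $2 + \sqrt{2}$. Therefore $u + v \le 2 + \sqrt{2}$ throughout the square.

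It remains to check that the choice $\alpha = 2 - \sqrt{2}$ is exactly what makes the second factor non-negative: since $\tfrac{\alpha}{2}(2 + \sqrt{2}) = \tfrac{(2-\sqrt{2})(2+\sqrt{2})}{2} = \tfrac{4-2}{2} = 1$, the bound $u + v \le 2 + \sqrt{2}$ gives $1 - \tfrac{\alpha}{2}(u+v) \ge 1 - 1 = 0$. Both factors being non-negative, their product $h$ is non-negative, with equality precisely on the edges $u = v$ (i.e.\ $x = 0$ or $y = 0$) and at $(1,1)$. I expect the only non-obvious step --- and the crux of the argument --- to be spotting the factorization above; once the substitution to $u, v$ is made, everything else is elementary monotonicity together with the defining property of $\alpha$, which is evidently tuned so that $h$ vanishes at all four corners of the square.
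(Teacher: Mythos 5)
Your proof is correct, and it takes a genuinely different route from the paper's. The paper works in polar coordinates, writing $x = r\cos(\theta - \pi/4)$, $y = r\sin(\theta - \pi/4)$, so that $h = r\bigl(\sqrt{2}\sin\theta - 1\bigr) + \tfrac{\alpha r^{2}}{2}\cos 2\theta$; for fixed $\theta \in [\pi/4, 3\pi/4]$ this is a concave quadratic in $r$, so its minimum over $r \in [0,\sqrt{2}]$ sits at an endpoint, and the nontrivial endpoint factors as $h(\sqrt{2},\theta) = 2\alpha(1-\sin\theta)\bigl(\sin\theta - \tfrac{1}{\sqrt{2}}\bigr) \geq 0$. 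Your substitution $u = x+y$, $v = \sqrt{x^{2}+y^{2}}$ achieves the analogous factorization globally rather than slice by slice: $h = (u-v)\bigl(1 - \tfrac{\alpha}{2}(u+v)\bigr)$, with $u \geq v$ because $u^{2} - v^{2} = 2xy \geq 0$, and $1 - \tfrac{\alpha}{2}(u+v) \geq 0$ because $u+v \leq 2+\sqrt{2} = 2/\alpha$ on the square. Your version is shorter and entirely algebraic: it dispenses with the paper's calculus step (concavity in $r$ plus reduction to endpoints) and with the need to enlarge the domain to the sector $r \in [0,\sqrt{2}]$, $\theta \in [\pi/4,3\pi/4]$; it also yields the exact equality set ($xy = 0$ or $(x,y)=(1,1)$) for free, and makes transparent why $\alpha = 2-\sqrt{2}$ is the largest admissible constant, namely $2/\alpha$ must dominate the maximum of $u+v$, which is $2+\sqrt{2}$. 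Both arguments exploit the same underlying phenomenon --- $h$ vanishes on the axes and at $(1,1)$, so a factorization with those zero sets should exist --- but the paper extracts it one boundary slice at a time while you exhibit it in closed form.
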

\begin{proof}
If we express $x$ and $y$ in terms of the polar coordinates
\begin{align*}
x &= r \cos ( \theta - \pi/4 ),\\
y &= r \sin ( \theta - \pi/4 ),
\end{align*}
the function becomes
\begin{equation*}
h(r, \theta) = r \big[ \cos ( \theta - \pi/4 ) + \sin ( \theta - \pi/4 ) - 1 \big] - \frac{\alpha r^{2}}{2} \sin \big[ 2 ( \theta - \pi/4 ) \big] = r \big( \sqrt{2} \sin \theta - 1 \big) + \frac{\alpha r^{2}}{2} \cos 2 \theta.
\end{equation*}
To cover the square $x, y \in [0, 1]$ we prove the statement for $r \in [ 0, \sqrt{2} ]$ and $\theta \in [\pi/4, 3 \pi/4]$. For fixed $\theta$ the function $h(r, \theta)$ is a quadratic function of $r$ and the coefficient of the quadratic term is non-positive. This means that in order to determine the minimum value, it suffices to consider the extreme points, i.e.~$r = 0$ and $r = \sqrt{2}$. Since $h(0, \theta) = 0$, we only have to look at the latter. We have
\begin{equation*}
h(\sqrt{2}, \theta) = 2 \sin \theta - \sqrt{2} + \alpha \cos 2 \theta = - 2 \alpha \sin^{2} \! \theta + 2 \sin \theta + 2 - 2 \sqrt{2} = 2 \alpha (1 - \sin \theta) \bigg( \sin \theta - \frac{1}{\sqrt{2}} \bigg)
\end{equation*}
and it is easy to see that for $\theta \in [\pi/4, 3 \pi/4]$ each term is non-negative.
\end{proof}
%
%In this appendix we show how to use the observed performance to provide a lower bound on $\sum_{i} \norm{ A_{i} }$ and upper and lower bounds on $\norm{ \sqrt{A_{i}} \sqrt{B_{j}} }$.
Moreover, we use the following operator norm inequality derived by Kittaneh~\cite{Kittanehsum}.
\begin{thm}
\label{thm:Kittaneh2}
For positive semidefinite operators $A$ and $B$ acting on a finite-dimensional Hilbert space we have
\begin{equation}
\label{eq:kittaneh2}
\norm{ A + B } \leq \frac{1}{2} \left( \norm{A} + \norm{B} + \sqrt{\left( \norm{A} -\norm{B} \right)^{2} + 4 \norm{ \sqrt{A} \sqrt{B} }^2 } \right).
\end{equation}
\end{thm}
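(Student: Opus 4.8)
The plan is to follow the same block-operator strategy used to prove Theorem~\ref{thm:Kittaneh}, but to estimate the norm of the resulting $2\times2$ block matrix directly rather than splitting it via the triangle inequality (which is exactly the step that loses sharpness). As before I would set $X = \begin{pmatrix} \sqrt{A} \\ \sqrt{B} \end{pmatrix}$, so that $\norm{A+B} = \norm{X^\dagger X} = \norm{XX^\dagger}$ with
\[
XX^\dagger = \begin{pmatrix} A & \sqrt{A}\sqrt{B} \\ \sqrt{B}\sqrt{A} & B \end{pmatrix}.
\]
The problem then reduces to bounding the operator norm of this single positive block matrix.

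The key step is a norm-compression lemma: for any block operator $T = \begin{pmatrix} P & Q \\ R & S \end{pmatrix}$, the operator norm is dominated by the operator norm of the $2\times2$ scalar matrix $\widetilde{M} = \begin{pmatrix} \norm{P} & \norm{Q} \\ \norm{R} & \norm{S} \end{pmatrix}$ assembled from the block norms. To prove it I would take an arbitrary unit vector written in block form as $(\psi_1, \psi_2)$ and bound each output block by the triangle inequality, e.g. $\norm{P\psi_1 + Q\psi_2} \le \norm{P}\,\norm{\psi_1} + \norm{Q}\,\norm{\psi_2}$, and likewise for the second block. Writing $u = (\norm{\psi_1}, \norm{\psi_2})$, which is a unit vector in $\mathbb{R}^2$ with nonnegative entries, this gives $\norm{T(\psi_1,\psi_2)}^2 \le \norm{\widetilde{M} u}^2 \le \norm{\widetilde{M}}^2$ by the definition of the operator norm, and hence $\norm{T} \le \norm{\widetilde{M}}$ after taking the supremum over unit vectors.

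Applying this to $T = XX^\dagger$, and using $\norm{\sqrt{B}\sqrt{A}} = \norm{(\sqrt{A}\sqrt{B})^\dagger} = \norm{\sqrt{A}\sqrt{B}}$, the compression matrix is the real symmetric $\widetilde{M} = \begin{pmatrix} \norm{A} & \norm{\sqrt{A}\sqrt{B}} \\ \norm{\sqrt{A}\sqrt{B}} & \norm{B} \end{pmatrix}$, whose operator norm equals its largest eigenvalue. Computing this eigenvalue explicitly yields $\tfrac{1}{2}\big(\norm{A}+\norm{B}+\sqrt{(\norm{A}-\norm{B})^2 + 4\norm{\sqrt{A}\sqrt{B}}^2}\big)$, which is precisely the claimed right-hand side. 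The main obstacle is establishing the compression lemma cleanly (in particular checking that feeding the nonnegative coordinate vector $u$ into $\widetilde{M}$ genuinely upper bounds each output block); once that is in place, the block-operator identity and the $2\times2$ eigenvalue computation are routine.
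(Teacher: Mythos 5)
Your proof is correct, and there is in fact nothing in the paper to compare it against: the paper does not prove Theorem~\ref{thm:Kittaneh2} at all, but imports it as a black box from Kittaneh~\cite{Kittanehsum}. Your argument therefore supplies a genuine, self-contained derivation, and it is best viewed as a sharpening of the proof the paper \emph{does} give for the weaker Theorem~\ref{thm:Kittaneh}: both start from the identity $\norm{A+B}=\norm{X^\dagger X}=\norm{XX^\dagger}$ for the same block column $X$ built from $\sqrt{A}$ and $\sqrt{B}$, but where the paper splits $XX^\dagger$ into its diagonal and off-diagonal parts and applies the triangle inequality (precisely the lossy step you identify), you bound $\norm{XX^\dagger}$ directly by the norm of the $2\times 2$ matrix of block norms. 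Your norm-compression lemma is sound as sketched: for a unit vector $(\psi_1,\psi_2)$ the vector $u=(\norm{\psi_1},\norm{\psi_2})$ is a unit vector in $\mathbb{R}^2$ with nonnegative entries, the triangle inequality and submultiplicativity give $\norm{P\psi_1+Q\psi_2}\le\norm{P}\norm{\psi_1}+\norm{Q}\norm{\psi_2}=(\widetilde{M}u)_1$ (and similarly for the second block, all entries involved being nonnegative), so $\norm{T\psi}^2\le\norm{\widetilde{M}u}^2\le\norm{\widetilde{M}}^2$. The only point worth making explicit is why the operator norm of $\widetilde{M}$ equals its \emph{largest} eigenvalue $\lambda_+$ rather than $\abs{\lambda_-}$ (the matrix need not be positive semidefinite when $\norm{A}\norm{B}<\norm{\sqrt{A}\sqrt{B}}^2$): this follows from $\lambda_++\lambda_-=\norm{A}+\norm{B}\ge 0$, which forces $\lambda_+\ge\abs{\lambda_-}$. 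As a pleasant by-product, your stronger bound recovers Theorem~\ref{thm:Kittaneh} immediately, since $\sqrt{(\norm{A}-\norm{B})^2+4\norm{\sqrt{A}\sqrt{B}}^2}\le\abs{\norm{A}-\norm{B}}+2\norm{\sqrt{A}\sqrt{B}}$, so a single argument could replace both appeals to Kittaneh's results in the appendices.
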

In our argument $A$ and $B$ will be particular measurement operators from the two measurements. We define the \emph{generalised overlap} between $A_{i}$ and $B_{j}$ as
\begin{equation*}
s_{ij} := \norm{ \sqrt{A_{i} } \sqrt{ B_{j} } } \in [0, 1].
\end{equation*}
Another relevant quantity of a pair of measurement operators is the \emph{norm deficiency} defined as
\begin{equation*}
n_{ij} := 1 - ( \norm{ A_{i} } + \norm{ B_{j} } )/2 \in [0, 1].
\end{equation*}
It is easy to see that if $n_{ij} = 0$ for all $i, j$, we have
\begin{equation*}
\sum_{i} \norm{ A_{i} } = \sum_{j} \norm{ B_{j} } = d,
\end{equation*}
i.e.~both measurements are rank-1 projective. Our goal now is to relate the right-hand side of Eq.~\eqref{eq:kittaneh2} to $s_{ij}$ and $n_{ij}$. First, note that
\begin{equation*}
\norm{ A_{i} } - \norm{ B_{j} } = 2 \norm{ A_{i} } - ( \norm{ A_{i} } + \norm{ B_{j} } ) \leq 2 - 2 (1 - n_{ij}) = 2 n_{ij}
\end{equation*}
and similarly
\begin{equation*}
\norm{ B_{j} } - \norm{ A_{i} } \leq 2 n_{ij}.
\end{equation*}
These two inequalities imply that
\begin{equation*}
\big( \norm{ A_{i} } - \norm{ B_{j} } \big)^{2} \leq 4 n_{ij}^{2}
\end{equation*}
and plugging this back into Eq.~\eqref{eq:kittaneh2} gives
\begin{equation*}
%\label{eq:Ai-Bj}
%
\norm{ A_{i} + B_{j} } \leq 1 - n_{ij} + \sqrt{ n_{ij}^{2} + s_{ij}^{2} }.
\end{equation*}
Applying the inequality derived in Lemma~\ref{eq:random-inequality} to $s_{ij}$ and $n_{ij}$ gives
\begin{equation*}
\norm{ A_{i} + B_{j} } \leq 1 + s_{ij} - \alpha s_{ij} n_{ij},
\end{equation*}
where $\alpha = 2 - \sqrt{2}$. Applying this upper bound to Eq.~\eqref{eq:appaspbound} immediately yields
\begin{equation}
\label{eq:appaspboundns}
\bar{p} \leq \frac{1}{2 d^{2}} \sum_{ij} \big( 1 + s_{ij} - \alpha s_{ij} n_{ij} \big) = \frac{1}{2} + \frac{1}{2 d^{2}} \sum_{ij} s_{ij} - \frac{ \alpha }{ 2 d^{2} } \sum_{ij} s_{ij} n_{ij}.
\end{equation}
Let us first bound the range of $s_{ij}$, i.e.~find explicit functions of $\bar{p}$ denoted by $\smin$ and $\smax$ such that
\begin{equation*}
s_{ij} \in [\smin, \smax]
\end{equation*}
for all $i, j$. To do this we drop the last term in Eq.~\eqref{eq:appaspboundns} to obtain
\begin{equation*}
\bar{p} \leq \frac{1}{2} + \frac{1}{2 d^{2}} \sum_{ij} s_{ij}.
\end{equation*}
To bound the sum of $s_{ij}$ we bound the operator norm by the Frobenius norm:
\begin{equation*}
s_{ij} = \norm{ \sqrt{ A_{i} } \sqrt{ B_{j} } } \leq \norm{ \sqrt{ A_{i} } \sqrt{ B_{j} } }_{F} = \sqrt{ \tr ( A_{i} B_{j} ) } = \sqrt{ t_{ij} }
\end{equation*}
and finally use the normalisation condition $\sum_{ij} t_{ij} = d$. Let us now separate one term from the rest of the sum. For simplicity we choose the first term, i.e.~$s_{11}$, but by symmetry the same argument applies to every $s_{ij}$. We obtain
\begin{equation}
\label{eq:barp}
\bar{p} \leq \frac{1}{2} + \frac{1}{2 d^{2}} \Big( s_{11} + \sum_{ij \neq 11} s_{ij} \Big) \leq \frac{1}{2} + \frac{1}{2 d^{2}} \Big( s_{11} + \sum_{ij \neq 11} \sqrt{ t_{ij} } \Big).
\end{equation}
Since the remaining sum contains $d^{2} - 1$ terms, concavity of the square root implies that
\begin{equation*}
\sum_{ij \neq 11} \frac{1}{ d^{2} - 1 } \sqrt{ t_{ij} } \leq \sqrt{ \frac{ \sum_{ij \neq 11} t_{ij} }{ d^{2} - 1 } } = \sqrt{ \frac{ d - t_{11} }{ d^{2} - 1 } } \leq \sqrt{ \frac{ d - s_{11}^{2} }{ d^{2} - 1 } },
\end{equation*}
where in the last step we used the fact that $s_{11} \leq \sqrt{ t_{11} }$. Plugging this bound into Eq.~\eqref{eq:barp} gives
\begin{equation*}
\bar{p} \leq \frac{1}{2} + \frac{1}{2 d^{2}} \Big( s_{11} + \sqrt{ ( d^{2} - 1 )( d - s_{11}^{2} ) } \Big) =: f( s_{11} ).
\end{equation*}
Computing the derivative of $f$ shows that $f$ is increasing for $s_{11} < 1/\sqrt{d}$ and decreasing for $s_{11} > 1/\sqrt{d}$. The maximum achieved for $s_{11} = 1/\sqrt{d}$ corresponds to the optimal ASP. This implies that the lowest and highest values of $s_{11}$ compatible with the observed $\bar{p}$ can be determined by computing the two solutions of the equality
\begin{equation*}
\bar{p} = \frac{1}{2} + \frac{1}{2 d^{2}} \Big( s_{11} + \sqrt{ ( d^{2} - 1 )( d - s_{11}^{2} ) } \Big).
\end{equation*}
This reduces to solving a quadratic equation and finally we deduce that $s_{11} \in [\smin, \smax]$, where
\begin{align}
\label{eq:smin}
\smin := 2 \bar{p} - 1 - \frac{1}{d} \sqrt{ d ( d^{2} - 1 ) [ 1 - d ( 2 \bar{p} - 1 )^{2} ] },\\
\label{eq:smax}
\smax := 2 \bar{p} - 1 + \frac{1}{d} \sqrt{ d ( d^{2} - 1 ) [ 1 - d ( 2 \bar{p} - 1 )^{2} ] }.
\end{align}
The optimal performance, i.e.~$\bar{p} = \frac{1}{2} + \frac{1}{2 \sqrt{d}}$, implies that $\smin = \smax = \frac{1}{\sqrt{d}}$. Moreover, since both functions are continuous in $\bar{p}$, for sufficiently good performance we obtain bounds stronger than the trivial $s_{11} \geq 0$ and $s_{11} \leq 1$. This concludes the first part of the argument, i.e.~providing explicit bounds on the range of the generalised overlaps.

For the second part of the argument, in which we show that the measurements are close to being rank-1 projective, we need all the overlaps to be bounded away from $0$, i.e.~$\smin > 0$. According to Eq.~\eqref{eq:smin} this is guaranteed as long as $\bar{p} > \bar{p}_{0}$ for
\begin{equation*}
\bar{p}_{0} := \frac{1}{2} + \frac{1}{2 d^{2}} \sqrt{ (d^{2} - 1) d }.
\end{equation*}
Using the concavity result while keeping the negative term in Eq.~\eqref{eq:appaspboundns} leads to
\begin{equation*}
\bar{p} \leq \frac{1}{2} + \frac{1}{2 d^{2}} \Big( s_{11} + \sqrt{ ( d^{2} - 1 )( d - s_{11}^{2} ) } \Big) - \frac{ \alpha }{ 2 d^{2} } \sum_{ij} s_{ij} n_{ij}.
\end{equation*}
Without loss of generality we can assume that $s_{11}$ is the smallest overlap and then
\begin{equation*}
\bar{p} \leq \frac{1}{2} + \frac{1}{2 d^{2}} \Big( s_{11} + \sqrt{ ( d^{2} - 1 )( d - s_{11}^{2} ) } \Big) - \frac{ \alpha s_{11} }{ 2 d^{2} } \sum_{ij} n_{ij},
\end{equation*}
which is equivalent to
\begin{equation}
\label{eq:sum-nij}
\sum_{ij} n_{ij} \leq \frac{1}{ \alpha s_{11} } \bigg( s_{11} + \sqrt{ ( d^{2} - 1 )( d - s_{11}^{2} ) } - d^{2} ( 2 \bar{p} - 1 ) \bigg).
\end{equation}
To analyse the right-hand side, we define
\begin{equation*}
g(x) := 1 + \sqrt{ ( d^{2} - 1 ) \bigg( \frac{d}{x^{2}} - 1 \bigg) } - \frac{ d^{2} ( 2 \bar{p} - 1 ) }{x},
\end{equation*}
and now our goal is to maximise $g(x)$ over $x \in [0, 1/\sqrt{d}]$, as
$s_{\text{min}}\le1/\sqrt{d}$. Recall that we work under the assumption that $\bar{p} > \bar{p}_{0}$ and therefore $2 \bar{p} - 1 > 0$. We can analytically compute the derivative $dg/dx$ and set it to $0$ to conclude that the only stationary point corresponds to
\begin{equation*}
x^{*} := \frac{ \sqrt{ d^{3} ( 2 \bar{p} - 1 )^{2} - ( d^{2} - 1 ) } }{ d (2 \bar{p} - 1 ) } = \sqrt{ d - \frac{ d^{2} - 1 }{ d^{2} ( 2 \bar{p} - 1 )^{2} } }.
\end{equation*}
Evaluating the second derivative $d^{2} g / d x^{2}$ at $x^{*}$ tells us that this is a maximum and since this is the only stationary point, it must be the unique maximiser in the interval $[0, 1/\sqrt{d}]$.
%\blue{over $[1,1/\sqrt{d}]$} tells us that this is a \blue{global} maximum.
%which implies that $g(x) \leq g(x^{*})$ for $x \in [0, 1/\sqrt{d}]$.
Therefore, in Eq.~\eqref{eq:sum-nij} we can set $s_{11} = x^{*}$ to obtain
\begin{equation*}
\sum_{ij} n_{ij} \leq \frac{1}{\alpha} \Big( 1 - \sqrt{ d^{3} ( 2 \bar{p} - 1 )^{2} - ( d^{2} - 1 ) } \Big).
\end{equation*}
Finally, we can use this bound to obtain lower bounds on the sums of the norms $\sum_{i} \norm{A_{i}}$ and $\sum_{j} \norm{ B_{j} }$ for the individual measurements. Since
\begin{equation*}
\sum_{ij} n_{ij} = d^{2} - \frac{d}{2} \Big( \sum_{i} \norm{ A_{i} } + \sum_{j} \norm{ B_{j} } \Big),
\end{equation*}
we can use the trivial bound $N(B) = \sum_{j} \norm{ B_{j} } \leq d$ to obtain
\begin{equation}
\label{eq:sum-of-norms}
N(A) = \sum_{i} \norm{ A_{i} } \geq d - \frac{2}{d} \sum_{ij} n_{ij} \geq d - \frac{2}{\alpha d} \Big( 1 - \sqrt{ d^{3} ( 2 \bar{p} - 1 )^{2} - ( d^{2} - 1 ) } \Big).
\end{equation}
Clearly, the same lower bound holds for $N(B)$.

\section{Incompatibility robustness}
\label{app:incomp}
In this appendix we derive an analytic upper bound on the incompatibility robustness as a function of the observed ASP. We start with a bound derived recently in~\cite{incompbound}:
\begin{equation}
\label{eq:appincompbound}
\eta^{\ast} \leq \frac{ d^{2} \max_{ij} \norm{ A_{i} + B_{j} } - \sum_{i} ( \tr A_{i} )^{2} - \sum_{j} ( \tr B_{j} )^{2} }{ d \sum_{i} \tr A_{i}^{2} + d \sum_{j} \tr B_{j}^{2} - \sum_{i} ( \tr A_{i} )^{2} - \sum_{j} ( \tr B_{j} )^{2} }.
\end{equation}
The aim is to bound all the terms appearing in this formula by quantities which we have already bounded in Appendix~\ref{app:robust}.

Let us start with the numerator. The first term is easy to bound since
\begin{equation*}
\norm{ A_{i} + B_{j} } \leq 1 +s_{ij},
\end{equation*}
and $\max_{ij} s_{ij} \leq \smax$ given in Eq.~\eqref{eq:smax}.

To bound the second term we use the fact that for positive semidefinite operators $(\tr A)^{2} \geq \tr A^{2}$ and then bound the Frobenius norm by the operator norm:
\begin{equation*}
( \tr A_{i} )^{2} \geq \tr A_{i}^{2} = \norm{ A_{i} }_{F}^{2} \geq \norm{ A_{i} }^{2}.
\end{equation*}
To bound the sum of the squares $\sum_{i} \norm{ A_{i} }^{2}$ we use a standard inequality for vector $p$-norms which for $d$-dimensional vectors reads $\norm{x}_{2} \geq \frac{1}{\sqrt{d}} \norm{x}_{1}$. Applying this to the real vector whose components are given by $x_{i} = \norm{ A_{i} }$ yields
\begin{equation*}
\sum_{i} \norm{ A_{i} }^{2} \geq \frac{1}{d} \Big( \sum_{i} \norm{ A_{i} } \Big)^{2}.
\end{equation*}
Putting the two inequalities together gives
\begin{equation*}
\sum_{i} ( \tr A_{i} )^{2} \geq \frac{1}{d} \Big( \sum_{i} \norm{ A_{i} } \Big)^{2},
\end{equation*}
which can be bounded using Eq.~\eqref{eq:sum-of-norms}.

The first term in the denominator we have already bounded: from the previous argument we see that
\begin{equation*}
\sum_{i} \tr A_{i}^{2} \geq \frac{1}{d} \Big( \sum_{i} \norm{ A_{i} } \Big)^{2}.
\end{equation*}
Bounding the last term turns out to be slightly more involved, so we state it as a separate lemma.

\begin{lem}
\label{lem:appsumtracesquarebound}
Let $\{ A_{i} \}_{i}$ be a $d$-outcome measurement acting on $\mathbb{C}^{d}$. If
\begin{equation*}
\sum_{i} \norm{ A_{i} } \geq q,
\end{equation*}
then
\begin{equation*}
\sum_{i} ( \tr A_{i} )^{2} \leq d + ( d - q ) ( d - q + 1 ).
\end{equation*}
\end{lem}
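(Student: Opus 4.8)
The plan is to eliminate the operators entirely and reduce the statement to an elementary scalar inequality involving only the traces $a_{i} := \tr A_{i}$. First I would record two basic facts about each POVM element: since $A_{i}$ is positive semidefinite, its largest eigenvalue is bounded by its trace, so $\norm{ A_{i} } \leq \tr A_{i}$; and since $\sum_{i} A_{i} = \unit$ forces $A_{i} \leq \unit$, we also have $\norm{ A_{i} } \leq 1$. Together these give $\norm{ A_{i} } \leq \min(a_{i}, 1)$. Summing over $i$ and invoking the hypothesis $\sum_{i} \norm{ A_{i} } \geq q$ yields $\sum_{i} \min(a_{i}, 1) \geq q$, while the normalisation $\sum_{i} A_{i} = \unit$ gives $\sum_{i} a_{i} = d$. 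At this point the operators have disappeared and it suffices to prove: for nonnegative reals $a_{i}$ with $\sum_{i} a_{i} = d$ and $\sum_{i} \min(a_{i}, 1) \geq q$, one has $\sum_{i} a_{i}^{2} \leq d + (d-q)(d-q+1)$.

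The key step, which I expect to be the real content of the argument, is a decomposition that linearises the $\min$. I would write $m_{i} := \min(a_{i}, 1) \in [0,1]$ and $e_{i} := \max(a_{i} - 1, 0) \geq 0$, so that $a_{i} = m_{i} + e_{i}$ with the crucial property that $e_{i} > 0$ forces $m_{i} = 1$. Setting $S := \sum_{i} m_{i}$ and $E := \sum_{i} e_{i}$, we have $S + E = d$ and $S \geq q$, hence $E \leq d - q$ (and $E \geq 0$, with $d - q \geq 0$ since $\sum_{i} \norm{ A_{i} } \leq d$ makes the hypothesis vacuous unless $q \leq d$). Expanding $\sum_{i} a_{i}^{2} = \sum_{i} m_{i}^{2} + 2 \sum_{i} m_{i} e_{i} + \sum_{i} e_{i}^{2}$, I would bound the three pieces separately: $\sum_{i} m_{i}^{2} \leq \sum_{i} m_{i} = S$ because $m_{i} \in [0,1]$; $\sum_{i} m_{i} e_{i} = \sum_{i} e_{i} = E$ because $m_{i} = 1$ wherever $e_{i} \neq 0$; and $\sum_{i} e_{i}^{2} \leq \big( \sum_{i} e_{i} \big)^{2} = E^{2}$ because the $e_{i}$ are nonnegative.

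Combining these bounds and substituting $S = d - E$ gives $\sum_{i} a_{i}^{2} \leq S + 2E + E^{2} = d + E + E^{2} = d + E(E+1)$. Since $t \mapsto t(t+1)$ is increasing on $[0, \infty)$ and $0 \leq E \leq d - q$, this is at most $d + (d-q)(d-q+1)$, which is the claimed bound.

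I do not anticipate a serious technical obstacle: once the correct scalar relaxation is in place the argument is routine, and the only point demanding care is the passage $\norm{ A_{i} } \leq \min(\tr A_{i}, 1)$, which is precisely what turns the hypothesis on the sum of norms into the usable constraint $\sum_{i} \min(a_{i}, 1) \geq q$; the $m_{i}/e_{i}$ split then makes the optimisation transparent. It is worth noting in passing that the bound is tight, being attained by the projective measurement consisting of a single rank-$(d-q+1)$ projector together with $q-1$ mutually orthogonal rank-1 projectors, for which $\sum_{i} \norm{ A_{i} } = q$ and $\sum_{i} ( \tr A_{i} )^{2} = (d-q+1)^{2} + (q-1) = d + (d-q)(d-q+1)$.
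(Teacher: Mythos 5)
Your proof is correct, and it reaches the paper's bound by a cleaner route than the paper itself. Both arguments share the same skeleton: reduce everything to the traces $a_i = \tr A_i$ via $\norm{A_i} \leq \min(a_i,1)$ (the paper does this implicitly when it bounds $\norm{A_i} \leq \tr A_i$ on its low-trace set and $\norm{A_i} \leq 1$ on its high-trace set), isolate the total "excess above $1$" as the controlling quantity, show $\sum_i a_i^2 \leq d + E(E+1)$, and finish by monotonicity of $t \mapsto t(t+1)$ with $E \leq d-q$. Indeed your $E$ coincides exactly with the paper's $n - \gamma$, since $\sum_i a_i = d$ forces the total excess of the traces above $1$ to equal the total deficiency below $1$. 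Where you genuinely diverge is the middle step: the paper partitions the outcomes into sets $X$ (trace in $[0,1]$) and $Y$ (trace in $[1,d]$), observes that the admissible trace vectors on $Y$ form a polytope, characterises its vertices (all but one trace equal to $1$), and invokes convexity of $\sum_{i\in Y}(\tr A_i)^2$ to conclude the maximum sits at a vertex, finally optimising over the two partition parameters $n,\gamma$. You replace that extreme-point argument entirely by the decomposition $a_i = m_i + e_i$ and three one-line scalar bounds, of which $\sum_i e_i^2 \leq \big(\sum_i e_i\big)^2$ does the work of the vertex characterisation. Your version is shorter, avoids the (slightly informal) "it is easy to see" vertex claim and the bookkeeping over partitions, handles boundary cases (traces exactly $1$, empty sets) without comment being needed, and your closing tightness example --- one rank-$(d-q+1)$ projector plus $q-1$ rank-1 projectors, which the paper does not give --- confirms the bound is attained for integer $q$. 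The only thing the paper's formulation buys in exchange is that its polytope language makes explicit which trace configurations are extremal, but nothing downstream in the paper uses that extra information.
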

\begin{proof}
Before proceeding to the technical details, let us briefly explain the idea behind the proof. Suppose we are given a partition of the $d$ measurement outcomes into two disjoint sets. Moreover, we are promised that the trace of the measurement operators corresponding to the outcomes in the first (second) set belongs to the interval $[0, 1]$ ($[1, d]$). It turns out that an upper bound on the desired quantity can be derived in terms of simple properties of this partition. Maximising this bound over all valid partitions leads to the main result of the lemma.

Formally, we are given two sets $X$ and $Y$ such that $X \cup Y = [d]$ and $X \cap Y = \emptyset$. Moreover, we have
\begin{align*}
i \in X &\implies \tr A_{i} \in [0, 1],\\
i \in Y &\implies \tr A_{i} \in [1, d].
\end{align*}
Define $n := \abs{X}$, $\gamma := \sum_{i \in X} \tr A_{i}$ and clearly
\begin{equation}
\label{eq:constraint-1}
n - \gamma \geq 0.
\end{equation}
Moreover, the assumption of the lemma implies
\begin{equation*}
q \leq \sum_{i} \norm{ A_{i} } = \sum_{i \in X}\norm{A_i} + \sum_{i \in Y}\norm{A_i} \leq  \sum_{i \in X} \tr A_{i} + \abs{Y}  = \gamma + d - n
\end{equation*}
and therefore
\begin{equation}
\label{eq:constraint-2}
n - \gamma \leq d - q.
\end{equation}
For the rest of the argument let us think of $n$ and $\gamma$ as some fixed values. Once we derive the final upper bound in terms of these two variables, we will maximise it over the allowed pairs of $n$ and $\gamma$.

For $i \in X$ we have $(\tr A_{i})^{2} \leq \tr A_{i}$ and therefore
\begin{equation*}
\sum_{i \in X} ( \tr A_{i} )^{2} \leq \sum_{i \in X} \tr A_{i} = \gamma.
\end{equation*}
To bound the second term we must explicitly determine the allowed combinations of $\{ \tr A_{i} \}_{i \in Y}$. Since $\{ \tr A_{i} \}_{i \in Y} \in [1, d]^{\abs{Y}}$ and
\begin{equation*}
\sum_{i \in Y} \tr A_{i} = d - \gamma,
\end{equation*}
the valid choices of $\{ \tr A_{i} \}_{i \in Y}$ form a polytope. It is easy to see that all the vertices of this polytope correspond to setting $\abs{Y} - 1$ values to $1$ and the last value to $[d - \gamma - ( \abs{Y} - 1 ) ]$. Since $\sum_{i \in Y} ( \tr A_{i} )^{2}$ is a convex function of the traces, the maximal value is achieved at a vertex and therefore
\begin{equation*}
\sum_{i \in Y} ( \tr A_{i} )^{2} \leq (\abs{Y} - 1) + \big[ d - \gamma - (\abs{Y} - 1) \big]^{2}.
%= d - n - 1 + ( n - \gamma + 1 )^{2} = d + ( n - \gamma ) ( n - \gamma + 1 ) - \gamma.
%
\end{equation*}
Plugging in $\abs{Y} = d - n$ gives
\begin{equation*}
\sum_{i \in Y} ( \tr A_{i} )^{2} \leq d - n - 1 + ( n - \gamma + 1 )^{2} = d + ( n - \gamma ) ( n - \gamma + 1 ) - \gamma.
\end{equation*}
Putting the two bounds together leads to
\begin{equation*}
\sum_{i} ( \tr A_{i} )^{2} = \sum_{i \in X} ( \tr A_{i} )^{2} + \sum_{i \in Y} ( \tr A_{i} )^{2} \leq d + ( n - \gamma ) ( n - \gamma + 1 ).
\end{equation*}
Now we must maximise the right-hand side subject to the constraints given in Eqs.~\eqref{eq:constraint-1} and~\eqref{eq:constraint-2}. The maximum is achieved when the latter is saturated, which leads to the final result of the lemma.
\end{proof}
The final bound reads
\begin{equation}
\eta^{\ast} \leq \frac{ \frac{1}{2} d^{2} ( 1 + \smax ) - \frac{q^{2}}{d} }{ q^{2} - d - (d - q)(d - q + 1) }.
\end{equation}
where $\smax$ is the quantity defined in Eq.~\eqref{eq:smax}, while $q$ is the right-hand side of Eq.~\eqref{eq:sum-of-norms}.
\twocolumngrid

%\bibliography{selfbib}
%
%\bibliographystyle{apsrev4-1}

%merlin.mbs apsrev4-1.bst 2010-07-25 4.21a (PWD, AO, DPC) hacked
%Control: key (0)
%Control: author (72) initials jnrlst
%Control: editor formatted (1) identically to author
%Control: production of article title (-1) disabled
%Control: page (0) single
%Control: year (1) truncated
%Control: production of eprint (0) enabled
%

\end{document}